\newcommand{\techRep}{true} 
\newcommand{\iftechrep}{\ifthenelse{\equal{\techRep}{true}}}
\theoremstyle{theorem}
\newtheorem{proposition}[theorem]{Proposition}
\newtheorem{question}[theorem]{Question}
\newtheorem{claim}[theorem]{Claim}
\newcommand{\N}{\mathbb{N}}
\newcommand{\Q}{\mathbb{Q}}
\newcommand{\R}{\mathbb{R}}
\newcommand{\vone}{\mathbf{1}}%
\newcommand{\vzero}{\mathbf{0}}%
\newcommand{\pr}{\mathit{pr}}%
\newcommand{\conv}{\mathrm{conv}}
\newcommand{\NP}{\textsc{NP}}
\newcommand{\PSPACE}{\textsc{PSPACE}}
\newenvironment{qlemma}[1]{%
{\par\mbox{}\newline\noindent\bf Lemma #1.}%
\begin{itshape}%
}{%
\end{itshape}%
}
\newenvironment{qproposition}[1]{%
{\mbox{}\newline\noindent\bf Proposition #1.}
\begin{itshape}%
}{%
\end{itshape}%
}
\newcommand{\M}{\mathcal{M}}%
\newcommand{\rk}{\mathrm{rank}}%
\newcommand{\rkp}{\mathrm{rank}_{+}}%
\newcommand{\rrkp}{\mathrm{rrank}_{+}}%
\newcommand{\back}{\mathop{\mathrm{Back}}}%
\newcommand{\col}{\mathrm{Col}}%
\newcommand{\row}{\mathrm{Row}}%
\newcommand{\ie}{{i.e.}}%
\title{On Restricted Nonnegative Matrix Factorization}
\author[1]{Dmitry Chistikov%
    \thanks{Present address: Department of Computer Science, University of Oxford, UK.}}
\author[2]{Stefan Kiefer}
\author[2]{Ines Maru\v{s}i\'{c}}
\author[2]{\mbox{Mahsa Shirmohammadi}}
\author[2]{James Worrell}
\affil[1]{Max Planck Institute for Software Systems (MPI-SWS),
  Germany\\
  \texttt{dch@mpi-sws.org}}
\affil[2]{University of Oxford,
  UK\\
  \texttt{FirstName.LastName@cs.ox.ac.uk}}
\authorrunning{D. Chistikov and S. Kiefer and I. Maru\v{s}i\'{c} and M. Shirmohammadi and J. Worrell} 
\subjclass{F.1.1 Models of Computation, F.2.1 Numerical Algorithms and Problems}
\keywords{nonnegative matrix factorization, nonnegative rank, probabilistic automata, labelled Markov chains, minimization}
\begin{document}

\makeatletter
\renewcommand*{\@Copyright}{}
\makeatother

\maketitle

\begin{abstract}
 Nonnegative matrix factorization (NMF) is the problem of decomposing
  a given nonnegative $n \times m$ matrix~$M$ into a product of a
  nonnegative $n \times d$ matrix~$W$ and a nonnegative
  $d \times m$ matrix~$H$.  Restricted NMF requires in addition that
  the column spaces of $M$~and~$W$ coincide.  Finding the minimal
  inner dimension~$d$ is known to be \NP-hard, both for NMF and
  restricted NMF.  We show that restricted NMF is closely related to a
  question about the nature of minimal probabilistic automata, posed
  by Paz in his seminal 1971 textbook.  We use this connection to
  answer Paz's question negatively, thus falsifying a positive answer
  claimed in~1974.

  Furthermore, we investigate whether a rational matrix $M$ always has
  a restricted NMF of minimal inner dimension whose
  factors $W$ and $H$ are also rational.  We show that this holds for
  matrices $M$ of rank at most $3$ and we exhibit a rank-$4$ matrix
  for which $W$ and $H$ require irrational entries.
 \end{abstract}


\section{Introduction}
Nonnegative matrix factorization (NMF) is the task of factoring a
matrix of nonnegative real numbers $M$ (henceforth a \emph{nonnegative} matrix) as a product $M= W \cdot H$
such that matrices $W$ and $H$ are also nonnegative.  The smallest inner dimension of any such factorization is called the
\emph{nonnegative rank} of $M$, written $\rkp (M)$.

In machine learning, NMF was popularized by the seminal work of Lee
and Seung~\cite{lee1999learning} as a tool for finding features in
facial-image databases. Since then, NMF has found a broad range of
applications---including document clustering, topic modelling,
computer vision, recommender systems, bioinformatics, and acoustic
signal processing~\cite{BucakG07,BerryGG09,Cichocki09,TjioeBH10,YokotaZCY15,ZhangWFM06}.  
In applications,
matrix $M$ can typically be seen as a matrix of data points: each
column of~$M$ corresponds to a data point and each row to a
feature. Then, computing a nonnegative factorization $M = W \cdot H$
corresponds to expressing the data points (columns of $M$) as convex
combinations of latent factors (columns of $W$), \ie, as linear
combinations of latent factors with nonnegative coefficients (columns
of $H$).

From a computational perspective, perhaps the most 
basic problem concerning NMF is whether a given nonnegative
matrix of rational numbers $M$ admits an NMF with inner dimension at
most a given number~$k$.  Formally, the \emph{NMF problem} asks
whether $\rkp(M) \le k$.  In practical applications, various
heuristics and local-search algorithms are used to compute an
approximate nonnegative factorization, but little is known in terms of
their theoretical guarantees.
The NMF problem under the \emph{separability} assumption of Donoho and Stodden~\cite{DBLP:conf/nips/DonohoS03} is tractable:
an NMF $M = W \cdot H$ is called \emph{separable}
if every column of~$W$ is also a column of~$M$.
In 2012, Arora et al.~\cite{DBLP:conf/stoc/AroraGKM12} showed that it is decidable in polynomial time whether a given matrix admits a separable NMF with a given inner dimension. Further progress was made recently, with several efficient algorithms for
computing near-separable NMFs~\cite{kumar2012fast, doi:10.1137/130940670}.


Vavasis \cite{Vavasis09} showed that the problem of deciding whether
the rank of a nonnegative matrix is equal to its nonnegative rank is
\NP-hard. This result implies that generalizations of this problem,
such as the aforementioned NMF problem, the problem of computing the factors $W, H$ (in both exact and approximate versions), and nonnegative
rank determination, are also \NP-hard. It is not known whether any of
these problems are in~\NP. 

Vavasis~\cite{Vavasis09} notes that the difficulty in proving
membership in \NP\ lies in the fact that a certificate for a positive
answer to the NMF problem seems to require the sought factors: a pair
of nonnegative matrices $W, H$ such that $M = W \cdot H$.  Related to
this, Cohen and Rothblum~\cite{CohenRothblum93} posed the question of
whether, given a nonnegative matrix of rational numbers~$M$, there
always exists an NMF $M= W \cdot H$ of inner dimension equal to
$\rkp (M)$ such that both $W$ and $H$ are also matrices of rational
numbers.  A natural route to proving membership of the NMF problem in~\NP\
would be to give a positive answer to the question of Cohen and
Rothblum (as well as a polynomial bound on the bit-length of the
factors $W$ and~$H$).  However, the question remains open.  Currently
the best complexity bound for the NMF problem is membership
in~\PSPACE, which is obtained by translation into the existential
theory of real-closed fields~\cite{DBLP:conf/stoc/AroraGKM12}.  Such a
translation shows that one can always choose the entries of $W$ and~$H$ to be algebraic numbers.

In this work, we focus on the so-called \emph{restricted} NMF (RNMF)
problem, introduced by Gillis and Glineur~\cite{gillis2012geometric}.
The RNMF problem is defined as the NMF problem, except that the column
spaces of $M$ and~$W$ are required to coincide.  (Note that for any
NMF, the column space of~$M$ is a subspace of the column space
of~$W$.)  This problem has a natural geometric interpretation as the
\emph{nested polytope problem (NPP)}: the problem of finding a
minimum-vertex polytope nested between two given convex polytopes.  In
more detail, for a rank-$r$ matrix~$M$, finding an RNMF with
inner dimension~$d$ is known to correspond exactly to finding a nested polytope
with $d$~vertices in an $(r-1)$-dimensional NPP.

Our contributions are as follows.
\begin{enumerate}
\item We establish a tight connection between NMF and the coverability
  relation in labelled Markov chains (LMCs).  The latter notion was
  introduced by Paz~\cite{paz1971}.  Loosely speaking, an LMC~$\M'$
  \emph{covers} an LMC~$\M$ if for any initial distribution over the
  states of~$\M$ there is an initial distribution over the states
  of~$\M'$ such that $\M$ and~$\M'$ are equivalent.  In 1971,
  Paz~\cite{paz1971} asked a question about the nature of minimal
  covering LMCs.  The question was supposedly answered positively in
  1974~\cite{DBLP:journals/tc/Bancilhon74}.  However, we show that the
  correct answer is negative, thus falsifying the claim
  in~\cite{DBLP:journals/tc/Bancilhon74}.  Instrumental to our
  counterexample is the observation that restricted nonnegative rank
  and nonnegative rank can be different.  (Indeed, the wrong claims
  in~\cite{DBLP:journals/tc/Bancilhon74} seem to implicitly rely on
  the opposite assumption, although the notions of NMF and RNMF had
  not yet been developed.)
\item We show that the RNMF problem for matrices~$M$ of rank~$3$ or
  less can be solved in polynomial time.  In fact, we show that there
  is always a \emph{rational} NMF of~$M$ with inner dimension
  $\rkp(M)$, and that it can be computed in polynomial time in the
  Turing model of computation.  This improves a result
  in~\cite{gillis2012geometric} where the RNMF problem is shown to be
  solvable in polynomial time assuming a RAM model with unit-cost
  arithmetic.  Both our algorithm and the one
  in~\cite{gillis2012geometric} exploit the connection to the
  2-dimensional NPP, allowing us to take advantage of a geometric
  algorithm by Aggarwal et al.~\cite{DBLP:journals/iandc/YapABO89}.
  We need to adapt the algorithm
  in~\cite{DBLP:journals/iandc/YapABO89} to ensure that the occurring
  numbers are rational and can be computed in polynomial time in the
  Turing model of computation.
\item We exhibit a rank-$4$ matrix that has an RNMF with
  inner dimension~$5$ but \emph{no rational} RNMF with
  inner dimension~$5$.  We construct this matrix via a particular
  instance of the 3-dimensional NPP, again taking advantage of the
  geometric interpretation of RNMF.  Our result answers the RNMF
  variant of Cohen and Rothblum's question in~\cite{CohenRothblum93}
  negatively.  The original (NMF) variant remains open.
\end{enumerate}
Detailed proofs of all results can be found
\iftechrep{in the appendix.}{in the full version of this paper.}

\section{Nonnegative Matrix Factorization} \label{sec-prelims}
Let $\mathbb{N}$ and~$\mathbb{N}_{0}$ denote the set of all positive and nonnegative integers, respectively.  For every $n \in \mathbb{N}$, we write $[n]$ for the set $\{1, 2, \ldots, n\}$ and write $I_{n}$ for the identity matrix of order~$n$. For any ordered field $\mathbb{F}$, we denote by $\mathbb{F}_{+}$ the set of all its nonnegative elements. For any vector $v$, we write $v_{i}$ for its $i^\text{th}$ entry. A vector $v$ is called \emph{stochastic} if its entries are nonnegative real numbers that sum up to one. For every $i \in [n]$, we write $e_i$ for the $i^\text{th}$ coordinate vector in $\mathbb{R}^{n}$.
We write $\vone^{(n)}$ for the $n$-dimensional column vector with all ones.
We omit the superscript if it is understood from the context.

For any matrix $M$, we write $M_{i}$ for its $i^\text{th}$ row, $M^{j}$ for its $j^\text{th}$ column, and $M_{i, j}$ for its $(i,j)^\text{th}$ entry. 
The \emph{column space} (resp., \emph{row space}) of $M$, written $\col (M)$ (resp., $\row(M)$), is the vector space spanned by the columns (resp., rows) of $M$.
A matrix is called \emph{nonnegative} (resp., \emph{zero} or \emph{rational}) if so are all its entries.
A nonnegative matrix is \emph{column-stochastic} (resp., \emph{row-stochastic}) if the element sum of each of its columns (resp., rows) is one.

\subsection{Nonnegative Rank}\label{subsec-nonnegRank}

Let $\mathbb{F}$ be an ordered field, such as the reals $\mathbb{R}$ or the rationals $\mathbb{Q}$. Given a nonnegative matrix $M \in \mathbb{F}_+^{n \times m}$, a \emph{nonnegative matrix factorization (NMF) over $\mathbb{F}$} of $M$ is any representation of the form $M = W \cdot H$ where $W \in \mathbb{F}_+^{n \times d}$ and $H \in \mathbb{F}_+^{d \times m}$ are nonnegative matrices.
Note that $\col(M) \subseteq \col(W)$.
We refer to~$d$ as the \emph{inner dimension} of the NMF, and hence refer to NMF $M = W \cdot H$ as being \emph{$d$-dimensional}.
The \emph{nonnegative rank over $\mathbb{F}$} of $M$ is the smallest number $d \in \mathbb{N}_{0}$ such that there exists a $d$-dimensional NMF over $\mathbb{F}$ of $M$. An equivalent characterization~\cite{CohenRothblum93} of the nonnegative rank over $\mathbb{F}$ of $M$ is as the smallest number of rank-$1$ matrices in $\mathbb{F}_{+}^{n \times m}$ such that $M$ is equal to their sum. 
The nonnegative rank over~$\mathbb{R}$ will henceforth simply be called nonnegative rank, and will be denoted by $\rkp(M)$. For any nonnegative matrix $M \in \mathbb{R}_{+}^{n \times m}$, it is easy to see that 
$\rk (M) \le \rkp (M) \le \text{min}\{n, m\}$. 

Given a nonzero matrix $M \in \mathbb{F}_{+}^{n \times m}$, by removing the zero columns of $M$ and dividing each remaining column by the sum of its elements, we obtain a column-stochastic matrix $M^{\prime}$ 
with equal nonnegative rank.
Similarly, if $M = W \cdot H$ then after removing zero columns in~$W$ and multiplying with a suitable diagonal matrix~$D$, we get $M = W \cdot H = W D \cdot D^{-1} H$ where~$W D$ is column-stochastic.
If $M$ is column-stochastic then $\vone^\top = \vone^\top M = \vone^\top W D \cdot D^{-1} H = \vone^\top D^{-1} H$, hence $D^{-1} H$ is column-stochastic as well.
Thus, without loss of generality one can consider NMFs of column-stochastic matrices into column-stochastic matrices~\cite[Theorem 3.2]{CohenRothblum93}.

\begin{quote}\textbf{NMF problem:}
Given a matrix $M \in \mathbb{Q}_{+}^{n \times m}$ and $k \in \N$, is $\rkp(M) \le k$?
\end{quote}

The NMF problem is \NP-hard, even for $k = \rk(M)$ (see~\cite{Vavasis09}).
On the other hand, it is reducible to the existential theory of the reals, hence by~\cite{Can88,Renegar92} it is in \PSPACE.

For a matrix $M \in \mathbb{Q}_{+}^{n \times m}$, its nonnegative rank over $\mathbb{Q}$ is clearly at least $\rkp (M)$.
While those ranks are equal if $\rk (M) \le 2$,
a longstanding open question by Cohen and Rothblum asks whether they are always equal~\cite{CohenRothblum93}.
In other words, it is conceivable that there exists a rational matrix $M \in \mathbb{Q}_{+}^{n \times m}$ with $\rkp(M) = d$ that has no \emph{rational} NMF with inner dimension~$d$. 
Recently, Shitov~\cite{ShitovNonRankSubField15} exhibited a nonnegative matrix (with irrational entries) whose nonnegative rank over a subfield of $\R$ is different from its nonnegative rank over $\R$.

\subsection{Restricted Nonnegative Rank} \label{sec_prelims:RNR}
For all matrices~$M \in \mathbb{F}_{+}^{n \times m}$, 
an NMF $M = W \cdot H$ is called \emph{restricted NMF (RNMF)}~\cite{gillis2012geometric} if $\rk(M) = \rk(W)$.
As we know $\col(M) \subseteq \col(W)$ holds for all NMF instances, the condition
 $\rk(M) = \rk(W)$ is then equivalent to $\col(M) = \col(W)$.
The \emph{restricted nonnegative rank over $\mathbb{F}$} of $M$ is the smallest number $d \in \mathbb{N}_{0}$ such that there exists a $d$-dimensional restricted nonnegative factorization over $\mathbb{F}$ of $M$. 
Unless indicated otherwise, henceforth we will assume $\mathbb{F} = \mathbb{R}$ when speaking of the restricted nonnegative rank of~$M$, and denote it by~$\rrkp(M)$.

\begin{quote}\textbf{RNMF problem:}
Given a matrix $M \in \mathbb{Q}_{+}^{n \times m}$ and $k \in \N$, is \mbox{$\rrkp(M) \le k$}?
\end{quote}

We have the following basic properties. 
\begin{lemma}[\cite{gillis2012geometric}]\label{lemma:basic_properties_RNR}
Let $M \in \mathbb{R}_{+}^{n \times m}$. Then $\rk (M) \le \rkp (M) \le \rrkp (M) \le m$. Moreover, if $\rk (M) = \rkp (M)$ then $\rk (M) = \rrkp (M)$. 
\end{lemma}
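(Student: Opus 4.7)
The plan is to verify each inequality in the chain $\rk(M) \le \rkp(M) \le \rrkp(M) \le m$ in order, and then extract the second statement from a close inspection of when equality can hold.

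First I would prove $\rk(M) \le \rkp(M)$ by noting that any NMF $M = W \cdot H$ with inner dimension $d$ forces $\rk(M) \le \min\{\rk(W), \rk(H)\} \le d$; taking the infimum over $d$ gives the desired bound. For the middle inequality $\rkp(M) \le \rrkp(M)$, the point is simply that every RNMF is in particular an NMF, so the minimum inner dimension taken over the (smaller) class of RNMFs can only be larger. For the upper bound $\rrkp(M) \le m$, I would exhibit the trivial factorization $M = M \cdot I_{m}$: taking $W := M$ and $H := I_{m}$, both factors are nonnegative, the inner dimension is $m$, and clearly $\col(W) = \col(M)$, so this is an RNMF.

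For the second statement, suppose $\rk(M) = \rkp(M) =: d$ and pick any NMF $M = W \cdot H$ of inner dimension $d$ with $W \in \R_+^{n \times d}$, $H \in \R_+^{d \times m}$. The chain $d = \rk(M) \le \rk(W) \le d$ forces $\rk(W) = d$, and since $\col(M) \subseteq \col(W)$ with both spaces of the same finite dimension, they must coincide. Hence this NMF is in fact restricted, so $\rrkp(M) \le d$; combined with $\rrkp(M) \ge \rkp(M) = d$ from the first part, one obtains $\rrkp(M) = d = \rk(M)$.

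There is no real obstacle here: every step reduces to a dimension comparison or to exhibiting an explicit factorization. The only subtlety worth stating explicitly is the observation that $\col(M) \subseteq \col(W)$ together with matching dimensions promotes the inclusion to equality, which is what turns a rank-minimal NMF automatically into an RNMF.
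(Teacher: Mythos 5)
Your proof is correct and self-contained: each inequality is justified properly, the trivial factorization $M = M \cdot I_m$ does witness $\rrkp(M) \le m$, and the observation that a rank-minimal NMF automatically satisfies $\rk(W) = \rk(M)$ (hence is restricted) is exactly the right way to get the final implication. The paper itself gives no proof of this lemma --- it is quoted from Gillis and Glineur --- and your argument is the standard one underlying that reference, so there is nothing to add.
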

Thus, with the above-mentioned \NP-hardness result, it follows that the RNMF problem is also \NP-hard and in \PSPACE.

For a matrix $M \in \mathbb{Q}_{+}^{n \times m}$, its restricted nonnegative rank over $\mathbb{Q}$ is clearly at least $\rrkp (M)$. As with nonnegative rank, in general it is not known whether the restricted nonnegative ranks of $M$ over $\mathbb{R}$ and over $\mathbb{Q}$ are equal.  By~\cite[Theorem 4.1]{CohenRothblum93} and Lemma~\ref{lemma:basic_properties_RNR}, this is true when $\rk (M) \leq 2$.

RNMF has the following geometric interpretation.
For a dimension $\ell \in \N$,
the \emph{convex combination} of a set $\{ v_{1}, \ldots, v_{m} \} \subset \R^\ell$ is a point $\lambda_{1} v_{1} + \cdots + \lambda_{m} v_{m}$
where $(\lambda_{1}, \ldots, \lambda_{m})$ is a stochastic vector. 
The \emph{convex hull} of $\{ v_{1}, \ldots, v_{m} \}$, written as $\conv\{ v_{1}, \ldots, v_{m} \}$, is the set of all convex combinations of $\{ v_{1}, \ldots, v_{m} \}$.
We call $\conv\{ v_{1}, \ldots, v_{m} \}$ a \emph{polytope spanned by} $v_1, \ldots, v_m$.
A \emph{polyhedron} is a set $\{\, x \in \R^\ell \mid A x + b \ge 0 \,\}$ with $A \in \R^{n \times \ell}$ and $b \in \R^n$.
A set is a polytope if and only if it is a bounded polyhedron.
A polytope is \emph{full-dimensional} (i.e., has volume) if the matrix $(A \ \ b) \in \R^{n \times (\ell+1)}$ has rank $\ell+1$.

\begin{quote}\textbf{Nested polytope problem (NPP):}
Given~$r,n \in \N$,
let $A \in \Q^{n \times (r-1)}$ and $b \in \Q^n$ be such that $P = \{\, x \in \R^{r-1} \mid A x + b \ge 0 \,\}$ is a full-dimensional polytope.
Let $S \subseteq P$ be a full-dimensional polytope described by spanning points.
The \emph{nested polytope problem (NPP)} asks, given~$A, b,S$ and a number~$k\in \N$,
 whether there exist $k$~points that span a polytope~$Q$ with $S \subseteq Q \subseteq P$.
Such a polytope~$Q$ is called \emph{nested} \mbox{between $P$~and~$S$}.
\end{quote}

The following proposition appears as Theorem~1 in~\cite{gillis2012geometric}.
\begin{proposition} \label{prop:Belgian-reduction}
 The RNMF problem and the NPP are interreducible in polynomial time.
\end{proposition}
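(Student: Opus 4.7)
The plan is to establish both reductions via the geometric reformulation of RNMF. The starting observation is that every RNMF instance can be normalized: by rescaling columns (as in Section~\ref{subsec-nonnegRank}), one may assume the target matrix $M$ is column-stochastic, and then both factors of any RNMF can be taken column-stochastic without changing the inner dimension. Under this normalization, the columns of $W$ are stochastic vectors lying in $\col(M) = \col(W)$, and the columns of $H$ encode each column of $M$ as a convex combination of these $d$ vectors. Geometrically, a $d$-dimensional RNMF of a column-stochastic $M$ of rank $r$ amounts to choosing $d$ stochastic vectors in $\col(M) \cap \R_{+}^{n}$ whose convex hull contains every column of $M$.

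For the direction from RNMF to NPP, I would compute a rational basis $B \in \Q^{n \times r}$ of $\col(M)$ by Gaussian elimination in polynomial time, and rewrite each $M^{j} = B \alpha_{j}$ for a rational coordinate vector $\alpha_{j} \in \Q^{r}$. The set of stochastic vectors in $\col(M)$ corresponds to the polytope $\{\alpha : B \alpha \ge 0,\ \vone^{\top} B \alpha = 1\}$, which has affine dimension $r-1$. Eliminating one variable via the affine equation yields a full-dimensional polytope $P = \{x \in \R^{r-1} : A x + b \ge 0\}$ with rational $A$ and $b$, together with rational spanning points $s_{1}, \ldots, s_{m}$ (images of the $\alpha_{j}$'s) defining $S = \conv\{s_{1}, \ldots, s_{m}\}$. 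Since $S$ is full-dimensional exactly when $M$ has rank $r$, which is the case by construction, a $d$-dimensional RNMF exists iff $d$ points can be chosen in $P$ so that their convex hull contains $S$.

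For the direction from NPP to RNMF, I would set $M^{j} = A s_{j} + b$ for each spanning point $s_{j}$ of $S$. Nonnegativity is immediate from $s_{j} \in S \subseteq P$, and a short linear-algebra calculation shows $\rk(M) = r$ whenever $b \notin \col(A)$; when $b \in \col(A)$ one lifts $A$ and $b$ by an extra row that translates the image of $P$ off the origin, which preserves the combinatorics of all nested polytopes. An arbitrary $d$-dimensional RNMF $M = W \cdot H$ then provides $d$ nonnegative vectors in $\col(M)$ that, after stochastic normalization, pull back under $s \mapsto A s + b$ to $d$ vertices of a polytope nested between $S$ and $P$; conversely, a nested polytope with $d$ vertices yields the columns of $W$ directly, with $H$ supplying the convex combination coefficients read off from the fact that $S$ lies inside the polytope spanned by the images of those vertices.

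The main obstacle is the bookkeeping for rationality and polynomial bit-length throughout the change of coordinates: one must check that the inequality description of $P$ obtained by substituting a basis of $\col(M)$ into $\vone^{\top} B \alpha = 1$ and $B \alpha \ge 0$ has polynomial size in the Turing model, and one must treat the corner case $b \in \col(A)$ carefully in the reverse reduction so that the constructed $M$ has the required rank $r$ rather than $r-1$.
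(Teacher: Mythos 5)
Your overall route coincides with the paper's: normalize to column-stochastic factors, identify the outer polytope $P$ with the (affinely $(r-1)$-dimensional) set of stochastic nonnegative vectors in $\col(M)$ and the inner polytope $S$ with the convex hull of the columns of $M$, and in the reverse direction set $M^j = A s_j + b$. The RNMF-to-NPP direction of your sketch is fine and is exactly the construction the paper cites from Gillis and Glineur. The genuine gap is in the NPP-to-RNMF direction, at precisely the step the paper singles out as the hole it fills in Appendix~A. You assert that an arbitrary $d$-dimensional RNMF $M = W \cdot H$ provides points of $P$ ``after stochastic normalization''. But all one knows about a column $W^i$ is that it is a nonnegative vector of $\col(M) = \col((A \ \ b))$, i.e.\ $W^i = A \hat{c}_i + c_i b$ with $A \hat{c}_i + c_i b \ge 0$. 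To pull it back to a point $q_i = \hat{c}_i / c_i$ of $P$ you must show $c_i > 0$, and nothing in your sketch excludes $c_i = 0$ or $c_i < 0$; column-sum normalization does not help, since the columns $A s_j + b$ are not stochastic to begin with. The paper's argument rules out both bad cases using the boundedness of $P$, via the observation that there is no nonzero $y$ with $A y \ge 0$: $c_i = 0$ would force $\hat{c}_i = 0$ (impossible for a nonzero column of $W$), and $c_i < 0$ would contradict boundedness of $P$. This positivity is the crux of correspondence (a); once it is in place, the rest (scaling $H$ to be column-stochastic, $A q_i + b \ge 0$, and $S \subseteq \conv\{q_1,\dots,q_d\}$) is routine, as is your correspondence (b).

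Separately, the corner case $b \in \col(A)$ that you propose to handle by lifting cannot arise: the NPP as defined in the paper requires $(A \ \ b)$ to have rank $r$, and in any case a bounded polytope $P$ with nonempty interior forces $\rk((A \ \ b)) = r$ (if $b = A x_0$ then $P = -x_0 + \{y \mid Ay \ge 0\}$, which is a single point by boundedness). Since $S$ is full-dimensional, the vectors $(s_j^\top \ 1)^\top$ span $\R^r$, so $\rk(M) = r$ holds automatically and no modification of $A, b$ is needed. So you spend effort on a vacuous case while omitting the one argument that actually requires the geometry of $P$.
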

 More specifically, the reductions are as follows.
 \begin{enumerate}
  \item\label{r:nmf-to-npp} Given a nonnegative matrix $M \in \Q_{+}^{n \times m}$ of rank~$r$, one can compute in polynomial time $A \in \Q^{n \times (r-1)}$ and $b \in \Q^n$ such that $P = \{\, x \in \R^{r-1} \mid A x + b \ge 0 \,\}$ is a full-dimensional polytope, and $m$~rational points that span a full-dimensional polytope~$S \subseteq P$ such that
       \begin{enumerate}
         \item[(a)] any $d$-dimensional RNMF (rational or irrational) of~$M$ determines $d$~points that span a polytope~$Q$ with $S \subseteq Q \subseteq P$, and
         \item[(b)] any $d$~points (rational or irrational) that span a polytope~$Q$ with $S \subseteq Q \subseteq P$ determine a $d$-dimensional RNMF of~$M$.
       \end{enumerate}
  \item\label{r:npp-to-nmf} Let $A \in \Q^{n \times (r-1)}$ and $b \in \Q^n$ such that $P = \{\, x \in \R^{r-1} \mid A x + b \ge 0 \,\}$ is a full-dimensional polytope.
      Let $S \subseteq P$ be a full-dimensional polytope spanned by $s_1, \ldots, s_m \in \Q^{r-1}$.
      Then matrix $M \in \Q^{n \times m}$ with $M^i = A s_i + b$ for $i \in [m]$ satisfies (a)~and~(b).
 \end{enumerate}
Importantly, the correspondences (a)~and~(b) preserve rationality.
In \iftechrep{Appendix~\ref{app-prelims}}{the full version} we detail the reduction from point~\ref{r:npp-to-nmf} above, thereby filling in a small gap in the proof of~\cite{gillis2012geometric}.

\begin{example}[{\cite[Example~1]{gillis2012geometric}}] \label{example:rankNeqRrank}
Using the geometric interpretation of restricted nonnegative rank it follows easily that, in general, we may have $\rk(M) < \rkp(M) < \rrkp(M)$. 
Let \emph{3D-cube} NPP be the NPP instance where the inner and outer polytope are the standard 3D cube, i.e., $P = S = \{\, x \in \mathbb{R}^3 \mid x_i \in [0, 1],~ 1 \le i \le 3 \,\}$.
The only nested polytope is $Q = P$. The corresponding restricted NMF problem consists of the following matrix $M \in \mathbb{R}_{+}^{6 \times 8}$:
\[
M = \left(
\begin{smallmatrix}
0 & 0 & 0 & 0 & 1 & 1 & 1 & 1 \\
1 & 1 & 1 & 1 & 0 & 0 & 0 & 0 \\
0 & 0 & 1 & 1 & 0 & 0 & 1 & 1 \\
1 & 1 & 0 & 0 & 1 & 1 & 0 & 0 \\
0 & 1 & 0 & 1 & 0 & 1 & 0 & 1 \\
1 & 0 & 1 & 0 & 1 & 0 & 1 & 0
\end{smallmatrix}
\right).
\] 
We have $\rrkp(M) = 8$ and $\rk(M) = 4$.
Since $\rkp(M)$ is bounded above by the number of rows in~$M$,
we have $\rkp(M) \le 6$.
It is shown in~\cite{gillis2012geometric} that $\rkp(M) = 6$.
\end{example}

\section{Coverability of Labelled Markov Chains} 
\label{sec-coverability} 
In this section, we establish a connection between RNMF and the
coverability relation for labelled Markov chains.  We thereby answer
an open question posed in 1971 by Paz~\cite{paz1971} about the nature
of minimal covering labelled Markov chains.

A \emph{labelled Markov chain} (\emph{LMC}) is a tuple
$\M = (n, \Sigma, \mu)$ where $n \in \N$ is the number of states,
$\Sigma$ is a finite alphabet of labels, and function
$\mu : \Sigma \to [0, 1]^{n \times n }$ specifies the transition
matrices and is such that $\sum_{\sigma \in \Sigma} \mu(\sigma)$ is a
row-stochastic matrix.  The intuitive behaviour of the LMC~$\M$ is as
follows: When $\M$ is in state $i \in [n]$, it emits label $\sigma$
and moves to state $j$, with probability $\mu(\sigma)_{i, j}$.

We extend the function $\mu$ to words by defining
$\mu(\varepsilon) := I_{n}$ and
$\mu(\sigma_{1} \ldots \sigma_{k}) := \mu(\sigma_{1}) \cdots
\mu(\sigma_{k})$ for all~$k \in \N$, and all
$\sigma_{1}, \ldots, \sigma_{k} \in \Sigma$.  Observe that
$\mu( x y) = \mu(x) \cdot \mu(y)$ for all words~$x, y \in \Sigma^{*}$.
We view $\mu(w)$ for a word~$w \in \Sigma^*$ as follows:
if $\M$ is in state $i \in [n]$, it emits~$w$
and moves to state $j$ in $|w|$ steps, with probability
$\mu(w)_{i, j}$.  

For $i \in [n]$ and $w \in \Sigma^*$, we write
$\pr^\M_i(w) := e_i^{\top} \cdot \mu(w) \cdot \vone^{(n)}$ for the
probability that, starting in state $i$, $\M$ emits word $w$.  For
example, in Figure~\ref{fig-LMC} we have
$\pr^\M_0(a_1b_1)=\frac{1}{12}$.  More generally, for a given
\emph{initial distribution}~$\pi$ on the set of states~$[n]$ (viewed
as a stochastic row vector), we write
$\pr^\M_\pi(w) := \pi \cdot \mu(w) \cdot \vone^{(n)}$ for the
probability that $\M$ emits word $w$ starting from state distribution
$\pi$.  \iftechrep{We omit the superscript~$\M$ from $\pr^\M_\pi$ when
  it is clear from the context.}{}


We say that an LMC $\M$ is \emph{covered by} an LMC $\M^{\prime}$, written as $\M^{\prime} \ge \M$, 
if for every initial distribution $\pi$ for~$\M$ there exists a distribution $\pi^{\prime}$ for~$\M'$ such that
	$\pr^{\M}_\pi(w) = \pr^{\M'}_{\pi'}(w)$
for all words~$w \in \Sigma^*$.


The \emph{backward matrix} of~$\M$ is a matrix $\back \M  \in \mathbb{R}_{+}^{[n] \times \Sigma^*}$ where $(\back \M)_{i, w} = \pr^\M_i(w)$ for every $i \in [n]$ and $w \in \Sigma^*$. The \emph{rank} of $\M$ is defined by $\rk(\M) = \rk(\back \M)$.
(Matrix~$\back \M$ is infinite, but since it has $n$~rows, its rank is at most~$n$.)
It follows easily from the definition (see also~\cite[Theorem 3.1]{paz1971}) that $\M^{\prime} \ge \M$ if and only if there exists a row-stochastic matrix $A$ such that $A \cdot \back \M^{\prime}= \back \M$.

LMCs can be seen as a special case of stochastic sequential machines, a class of probabilistic automata introduced and studied by Paz~\cite{paz1971}.
More specifically, they are stochastic sequential machines with a singleton input alphabet and $\Sigma$ as output alphabet.
In his seminal 1971 textbook on probabilistic automata~\cite{paz1971}, Paz asks the following question:

\begin{question}[Paz~\cite{paz1971}, p.~38] \label{question-Paz}
If an $n$-state LMC $\M$ is covered by an $n^{\prime}$-state LMC $\M^{\prime}$ where $n^{\prime} < n$, is $\M$ necessarily covered by some $n^*$-state LMC $\M^*$, where $n^* < n$, such that $\M^*$ and $\M$ have the same rank?
\end{question}
\medskip

\noindent In~1974, a positive answer to this question was claimed~\cite[Theorem~13]{DBLP:journals/tc/Bancilhon74}.
In fact, the author of~\cite{DBLP:journals/tc/Bancilhon74} makes a stronger claim, namely that the answer to Question~\ref{question-Paz} is yes, even if the inequality $n^* < n$ in Question~\ref{question-Paz} is replaced by $n^* \le n'$.
To the contrary, we show:
\begin{theorem} \label{thm-answer-Paz}
The answer to Question~\ref{question-Paz} is negative.
\end{theorem}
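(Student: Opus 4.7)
The plan is to construct an explicit $n$-state LMC $\M$ that is covered by some LMC $\M^{\prime}$ with fewer than $n$ states, but is not covered by any LMC having the same rank as $\M$ and fewer than $n$ states. The first step is to set up a dictionary between covers of $\M$ and nonnegative factorizations of~$\back\M$: by the already-quoted characterization $\M^{\prime}\ge\M \iff \exists\text{ row-stochastic }A\colon A\cdot\back\M^{\prime}=\back\M$, an $n^{\prime}$-state cover of $\M$ is exactly an NMF $\back\M=W\cdot H$ of inner dimension~$n^{\prime}$ with $W$ row-stochastic; conversely, from any such NMF one recovers an LMC by exploiting the Hankel-type recursion $\pr^{\M}_{i}(\sigma w)=\sum_{j}\mu(\sigma)_{i,j}\pr^{\M}_{j}(w)$ to read off transition probabilities from the factor~$H$. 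The additional constraint $\rk(\M^{*})=\rk(\M)$ translates to $\col(W)=\col(\back\M)$, which is precisely the restricted-NMF condition. Hence the minimum number of states of an arbitrary cover of $\M$ equals~$\rkp(\back\M)$, and the minimum number of states of a same-rank cover equals~$\rrkp(\back\M)$, so the theorem reduces to producing an LMC with $\rkp(\back\M)<\rrkp(\back\M)$.

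To find such an $\M$, I would leverage Example~\ref{example:rankNeqRrank}, the $6\times 8$ nonnegative matrix $M$ arising from the 3D-cube NPP and satisfying $\rkp(M)=6<8=\rrkp(M)$. The aim is to design an LMC so that a finite column-block of $\back\M$ coincides, up to normalization, with~$M$. Because RNMF is monotone under adding columns, embedding $M$ as a submatrix of $\back\M$ immediately forces $\rrkp(\back\M)\ge\rrkp(M)=8$. For the matching upper bound $\rkp(\back\M)\le 6$, I would transfer the $6$-dimensional NMF $M=W\cdot H$ of Example~\ref{example:rankNeqRrank} to the full backward matrix by engineering the transition structure of $\M$ so that the six latent row-mixtures of states prescribed by $W$ are \emph{closed} under the action of every letter; this yields a $6$-state LMC $\M^{\prime}$ whose backward matrix, extended over all of~$\Sigma^{*}$, certifies an NMF $\back\M=W\cdot\back\M^{\prime}$ of inner dimension~$6$.

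The main obstacle is precisely this closure step: the finite matrix $M$ does not by itself dictate how the six latent mixtures evolve under individual letters, and arbitrary choices of transition probabilities could easily inflate $\rkp(\back\M)$ beyond~$6$. The intended remedy is to let the letters of~$\Sigma$ act on the $8$ vertices of the cube (i.e., on the $8$ states of $\M$) by maps that respect the cube's face structure, so that any convex combination corresponding to a face of the nested cube remains a convex combination of the same form after one step. Once this closure is secured, combining the two bounds yields $\rkp(\back\M)\le 6 < 8 \le \rrkp(\back\M)$, whence the $8$-state LMC $\M$ is covered by a $6$-state LMC~$\M^{\prime}$ but by no LMC with $n^{*}<8$ states of the same rank as~$\M$, contradicting the claim of~\cite{DBLP:journals/tc/Bancilhon74} and settling Question~\ref{question-Paz} in the negative.
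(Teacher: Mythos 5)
Your high-level strategy is the same as the paper's: translate coverability into nonnegative factorizations of a backward matrix and exploit the cube matrix of Example~\ref{example:rankNeqRrank}, for which $\rkp(M)=6<8=\rrkp(M)$. The direction you need for the lower bound is sound: a same-rank cover $\M^*\ge\M$ with $n^*$ states gives $\back\M=A\cdot\back\M^*$ with $A$ row-stochastic and $\row(\back\M^*)=\row(\back\M)$, and restricting to a suitable finite column block yields an RNMF of $M$ of inner dimension $n^*$ (this is essentially the paper's Proposition~\ref{prop-LMC-determine}(b)). But your proposal has a genuine gap in the other direction, and it is exactly the step you yourself flag as ``the main obstacle.'' Your opening dictionary claims that \emph{any} NMF $\back\M=W\cdot H$ with $W$ row-stochastic can be converted back into a covering LMC by ``reading off transition probabilities from $H$.'' This is false in general: the rows of an arbitrary nonnegative factor $H$ need not satisfy the realization constraints of a backward matrix (all-ones $\varepsilon$-column, and invariance of the cone spanned by the rows of $H$ under the one-step operators with substochastic weights). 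Indeed, assuming such realizability for free is essentially the error behind the 1974 claim that this theorem refutes, so the equalities ``$\min$ cover size $=\rkp(\back\M)$'' and ``$\min$ same-rank cover size $=\rrkp(\back\M)$'' cannot be asserted without construction.

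Consequently the upper bound $\rkp$-side of your argument is not established: you never exhibit the $8$-state LMC $\M$ whose backward matrix embeds $M$, nor the $6$-state cover $\M'$, and your proposed remedy (letting letters act on the cube's vertices ``respecting the face structure'') does not visibly produce a cone spanned by the six latent columns of the non-restricted NMF $M=W\cdot H$ (note $\rk(W)=6>4$, so these latent vectors are not tied to faces of the nested cube) that is closed under every letter; it is not even clear such an $8$-state realization exists. The paper sidesteps this closure problem entirely by \emph{not} insisting on $m$ states: it builds $\M$ with $m+2=10$ states, adding a fresh initial state (emitting letters $a_i$ with probability $\tfrac1m$) and an absorbing sink (letter $\checkmark$), with the letters $b_j$ carrying the entries of $M$. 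With this three-layer structure, every column-stochastic $d$-dimensional NMF $M=W\cdot H$ is realized by inspection as a $(d+2)$-state covering LMC, and the counting $6+2=8<10$ together with $\rrkp(M)=8$ (which would be forced to be at most $7$ by a same-rank cover with at most $9$ states) yields the contradiction. To repair your proof you would either have to carry out the missing explicit construction on $8$ states (and verify both the embedding of $M$ and the existence of a strictly smaller cover), or adopt the paper's trick of auxiliary initial and sink states, which converts the unconstrained factorization $M=I_6\cdot M$ into a smaller covering LMC without any closure argument.
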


\noindent Theorem~\ref{thm-answer-Paz} falsifies the claim in~\cite{DBLP:journals/tc/Bancilhon74}.
In \iftechrep{Appendix~\ref{sub-Hippie-mistakes}}{the full version} we discuss in detail the mistake in~\cite{DBLP:journals/tc/Bancilhon74}.
To prove Theorem~\ref{thm-answer-Paz} we establish a tight connection between NMF and LMC coverability:

\newcommand{\stmtpropLMCdetermine}{
Given a nonnegative matrix $M \in \Q_{+}^{n \times m}$ of rank~$r$, one can compute in polynomial time an LMC $\M = (m+2,\Sigma,\mu)$ of rank~$r+2$ such that for all $d \in \N$:
\begin{enumerate}
\item[(a)] any $d$-dimensional NMF $M = W \cdot H$ determines an LMC $\M' = (d+2,\Sigma,\mu')$ with $\M' \ge \M$ and $\rk(\M') = \rk(W) + 2$, and
\item[(b)] any LMC $\M' = (d+2,\Sigma,\mu')$ with $\M' \ge \M$ determines a $d$-dimensional NMF $M = W \cdot H$  with $\rk(\M') = \rk(W) + 2$.
\end{enumerate}
In particular, for all $d \in \N$ the inequality $\rrkp(M) \le d$ holds if and only 
if $\M$ is covered by some $(d+2)$-state LMC $\M'$ such that $\M'$ and $\M$ have the same rank.
}

\begin{proposition} \label{prop-LMC-determine}
\stmtpropLMCdetermine
\end{proposition}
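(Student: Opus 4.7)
The plan is to build $\M$ so that $\back\M$ directly encodes $M$ together with two bookkeeping rows. I would normalize $M$ to be column-stochastic (which does not affect $\rrkp(M)$), set $\Sigma := \{a_1, \ldots, a_n, b, c\}$, and define $\M$ on $[m] \cup \{s, t\}$ by: from $j \in [m]$ emit $a_i$ with probability $M_{i,j}$ and move to $s$; from $s$ emit $b$ with probability $1$ and move to $t$; from $t$ emit $c$ with probability $1$ and stay. The submatrix of $\back\M$ on rows $[m] \cup \{s, t\}$ and columns $a_1, \ldots, a_n, b, c$ is block-diagonal with blocks $M^\top, 1, 1$, so $\rk(\M) \ge r+2$; every other column of $\back\M$ either duplicates one of these (e.g.\ $\pr^\M_j(a_i b c^\ell) = M_{i,j}$ for all $\ell \ge 0$) or is zero, so $\rk(\M) = r + 2$. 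For part~(a), given a $d$-dimensional NMF $M = W H$ (WLOG $W, H$ column-stochastic), construct $\M'$ on $[d] \cup \{s', t'\}$ with the identical recipe and $W$ in place of $M$. Covering is witnessed by $\pi^{(j)} := H^j$ for $j \in [m]$ and Dirac distributions at $s', t'$ for states $s, t$, using $W H = M$, and the same block analysis gives $\rk(\M') = \rk(W) + 2$.

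\textbf{Paragraph 2 (backward direction).} For part~(b), I would invoke the classical characterisation $\M' \ge \M$ iff there exists a row-stochastic $A$ with $A \cdot \back\M' = \back\M$, and fix such an $A$. Classify each state $k$ of $\M'$ by which first labels it emits with positive probability, giving sets $C$ (some $a_i$), $S$ ($b$), and $T$ ($c$). Reading off the $a_i$-, $b$-, and $c$-columns of $A \cdot \back\M' = \back\M$ and exploiting the very restricted supports of the $s$- and $t$-rows of $\back\M$ forces: $A_{s,k} = A_{t,k} = 0$ when $k \in C$; $A_{j,k} = A_{t,k} = 0$ (for $j \in [m]$) when $k \in S$; and $A_{j,k} = A_{s,k} = 0$ when $k \in T$. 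Any state lying in two of $C, S, T$ then has an all-zero column in $A$ and is irrelevant, and every state is in at least one because first-label probabilities sum to~$1$. After dropping irrelevant states the survivors partition as $C \sqcup S \sqcup T$, and the $b$- and $c$-column constraints force $|S|, |T| \ge 1$, so $|C| \le d$. Setting $W_{i,k} := \pr^{\M'}_k(a_i)$ for $k \in C$ and taking $H$ to be the transpose of the $m \times |C|$ block of $A$ on columns $C$, the $a_i$-column identity gives $W H = M$; padding with zero columns and rows then realizes inner dimension exactly $d$.

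\textbf{Paragraph 3 (rank equation and conclusion).} The main obstacle is the rank identity $\rk(\M') = \rk(W) + 2$, which requires pushing the covering constraints beyond first-letter columns. Using that $A_{t,k} > 0$ on every surviving $T$-state and that the $t$-row of $\back\M$ equals~$1$ on each $c^\ell$ and $0$ elsewhere, a convex-combination argument forces all $T$-rows of $\back\M'$ to coincide with this pattern; the parallel argument at words $b c^\ell$ collapses the $S$-rows, and the analogue at $a_i b c^\ell$ shows that each $C$-row of $\back\M'$ is determined by the column $W_{\cdot, k}$. Since $W$ is column-stochastic, $\vone_{|C|}$ already lies in the row span of $W$, so the $C$-rows span a space of dimension exactly $\rk(W)$; the $S$- and $T$-patterns are nonzero at columns $b$ and $c$, respectively, where the $C$-rows vanish, contributing $2$ further independent directions. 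The ``in particular'' clause follows by combining (a) and (b): $\rrkp(M) \le d$ is equivalent to the existence of a $d$-dimensional NMF with $\rk(W) = \rk(M)$, which by~(a) yields a $(d+2)$-state $\M' \ge \M$ with $\rk(\M') = r + 2 = \rk(\M)$; conversely any such $\M'$ yields via~(b) a $d$-dimensional NMF with $\rk(W) = \rk(M)$, and such an NMF is automatically restricted because $\col(M) \subseteq \col(W)$ and the two column spaces have equal dimension.
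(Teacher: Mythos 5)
Your gadget (states $[m]\cup\{s,t\}$, with $M$ encoded on the first letter and two bookkeeping states) is a minor variant of the paper's (which uses an initial state branching uniformly over letters $a_1,\dots,a_m$ and encodes $M$ on the second letter), and paragraphs 1 and 2 are essentially sound and parallel to the paper: the block structure gives $\rk(\M)=r+2$ (one nitpick: the $\varepsilon$-column is not a \emph{duplicate} of a letter column but the sum of the $a_i$-, $b$- and $c$-columns, which is where column-stochasticity of $M$, resp.\ $W$, is needed), the covering in (a) is the same row-stochastic witness $A\cdot\back\M'=\back\M$ as in the paper, and your $C/S/T$ classification in (b) extracts a $d$-dimensional NMF in the same way the paper's appendix does by singling out the two special states via the words $a_ib_j$ and $\checkmark$.

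The genuine gap is in paragraph 3, i.e.\ in the unconditional identity $\rk(\M')=\rk(W)+2$ of part (b). Your argument relies on ``$A_{t,k}>0$ on every surviving $T$-state'' (and its analogues), but being a survivor only means the state lies in exactly one of $C,S,T$; its whole column of $A$ may be zero, and the covering relation places no constraint at all on the $\back\M'$-row of such an unused state. In fact the unconditional identity fails for your construction: let $\M'$ be a disjoint copy of $\M$ plus one junk state that emits $a_1$ forever, where $e_1$ happens to be a column of $M$. Then $\M'\ge\M$ (match every initial distribution inside the copy), the extracted $W$ consists of the columns of $M$ together with $e_1$, so $\rk(W)=\rk(M)=r$, yet the junk row of $\back\M'$ is nonzero at the column $a_1a_1$ where every other row vanishes, so $\rk(\M')=r+3>\rk(W)+2$. (The same caveat applies to the literal wording of (b); the paper's appendix actually proves only the conditional statement it needs.) The repair is to prove exactly that conditional statement, which is all the ``in particular'' clause and Theorem~\ref{thm-answer-Paz} require: assume in addition $\rk(\M')=\rk(\M)$. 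From $A\cdot\back\M'=\back\M$ we get $\row(\back\M)\subseteq\row(\back\M')$, and equality of ranks forces $\row(\back\M')=\row(\back\M)$. Each column of your $W$ is the restriction of a row of $\back\M'$ to the columns $a_1,\dots,a_n$, and restricting any row of $\back\M$ to these columns gives a vector in $\col(M)$ (rows in $[m]$ give the columns of $M$, rows $s,t$ give zero); hence $\col(W)\subseteq\col(M)$, so $\rk(W)\le\rk(M)$, and $M=W\cdot H$ gives the reverse inequality. Thus the extracted NMF is restricted, yielding $\rrkp(M)\le d$, as needed.
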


Assuming Proposition~\ref{prop-LMC-determine} we can prove Theorem~\ref{thm-answer-Paz}:
\begin{proof}[Proof of Theorem~\ref{thm-answer-Paz}]
Let $M \in \{0,1\}^{6 \times 8}$ be the matrix from  Example~\ref{example:rankNeqRrank}. 
Let $\M = (10,\Sigma,\mu)$  be the associated LMC from Proposition~\ref{prop-LMC-determine}.
Since $M = I_6 \cdot M$ is an NMF with inner dimension~$6$, by Proposition~\ref{prop-LMC-determine}~(a) 
there is an LMC~$\M' = (8, \Sigma, \mu')$ with $\M' \ge \M$. 
Towards a contradiction, suppose the answer to Question~\ref{question-Paz} were yes.
Then $\M$ is also covered by some $n^*$-state LMC~$\M^*$, where $n^* \le 9$, such that $\M^*$ and $\M$ have the same rank.
The last sentence of Proposition~\ref{prop-LMC-determine} then implies that $\rrkp(M) \le 7$.
But this contradicts the equality $\rrkp(M) = 8$ from Example~\ref{example:rankNeqRrank}.
Hence, the answer to Question~\ref{question-Paz} is no.
\end{proof}

To prove Proposition~\ref{prop-LMC-determine} we adapt a reduction from NMF to the trace-refinement problem in Markov decision processes~\cite{16FKS-FOSSACS}.

\begin{proof}[Proof sketch of Proposition~\ref{prop-LMC-determine}]
Let $M \in \Q_{+}^{n \times m}$ be a nonnegative matrix of rank~$r$. 
As argued in Section~\ref{subsec-nonnegRank}, without loss of generality we may assume that~$M$ is column-stochastic and consider factorizations of $M$ into column-stochastic matrices only.

We define an LMC $\M = (m+2, \Sigma, \mu)$ with~$m+2$ states~$\{0, 1, \ldots, m, m+1\}$.
The alphabet is $\Sigma = \{a_1, \ldots, a_m\} \cup \{b_1, \ldots, b_n\} \cup \{\checkmark\}$
and the function~$\mu$, for all $i \in [m]$  and all $j \in [n]$, is defined by:
\begin{align*}
	\mu(a_i)_{0, i} &= \textstyle{\frac{1}{m}},
   &\mu(b_j)_{i, m+1} &= (M^{\top})_{i, j} = M_{j, i},
   &\mu(\checkmark)_{m+1, m+1} &= 1, 
\end{align*}
and all other entries of~$\mu(a_i)$, $\mu(b_j)$, and $\mu(\checkmark)$ are~$0$.  See Figure~\ref{fig-LMC} for an example.
We have:
\renewcommand{\kbldelim}{(}
\renewcommand{\kbrdelim}{)}
\vspace{-.2cm}
\[
\back \M
= \!\!\!\!\!
 \kbordermatrix{
    & \varepsilon & b_{1} & \cdots & b_{n} & \checkmark & a_{i}  & a_{i}b_{j} & b_{1}\checkmark  & \cdots & b_{n}\checkmark & \checkmark^2 & \cdots\\
 & 1 & 0 & \cdots & 0 & 0 & \frac{1}{m} & \frac{1}{m} M_{j, i} & 0 & \cdots & 0 & 0 & \cdots\\
 & 1 & M_{1, 1} & \cdots & M_{n, 1} & 0 & 0  & 0 & M_{1, 1} & \cdots & M_{n, 1} & 0 & \cdots\\
 & \vdots & \vdots & \ddots & \vdots & \vdots & \vdots  &  \vdots  & \vdots & \ddots & \vdots & \vdots & \cdots\\
 & 1 & M_{1, m} & \cdots & M_{n, m} & 0 & 0  & 0  & M_{1, m} & \cdots & M_{n, m} & 0 & \cdots\\
 & 1 & 0 & \cdots & 0  & 1 & 0 & 0 & 0 & \cdots & 0  & 1 & \cdots
  }.
\]
Thus $\rk(\M) = \rk(\back \M) \ge \rk(M)+2$. 
The first~$n+2$ columns (indexed by $\varepsilon, b_1, \ldots, b_n, \checkmark$) 
in $\back \M$ span $\col (\back \M)$. Therefore, $\rk(\M) = \rk(M)+2 = r+2$.

\begin{figure}
\begin{minipage}{0.50\textwidth}
\begin{center}
 \scalebox{.9}{\begin{tikzpicture}[xscale=2.5,yscale=2,baseline,>=stealth',every state/.style={minimum size=0.3,inner sep=1}]
\node[state] (0) at (0,0) {$0$};
\node[state] (1) at (1,1) {$1$};
\node[state] (2) at (1,0) {$2$};
\node[state] (3) at (1,-1) {$3$};
\node[state] (4) at (2,0) {$4$};
\node[state,draw=none] (m) at (0,1) {$\M$:};
\draw[->] (0) -- node[above=1mm,pos=0.6] {$\frac13, a_1$} (1);
\draw[->] (0) -- node[above=0mm,pos=0.6] {$\frac13, a_2$} (2);
\draw[->] (0) -- node[below=1mm,pos=0.6] {$\frac13, a_3$} (3);
\draw[->] (1) to[bend left=45]  node[above=1.5mm,pos=0.6] {$\frac14, b_1$} (4);
\draw[->] (1) to[bend right=10] node[above=1mm,pos=0.7] {$\frac34, b_2$} (4);
\draw[->] (2) to[bend left=20]  node[above=-0.5mm,pos=0.3] {$\frac12, b_1$} (4);
\draw[->] (2) to[bend right=20] node[below=-0.7mm,pos=0.3] {$\frac12, b_2$} (4);
\draw[->] (3) to[bend left=10]  node[below=1mm,pos=0.7] {$\frac34, b_1$} (4);
\draw[->] (3) to[bend right=45] node[below=1.5mm,pos=0.6] {$\frac14, b_2$} (4);
\path[->] (4) edge [loop,out=20,in=-20,looseness=13] node[above] {$1, \checkmark$} (4);
\end{tikzpicture}}
\end{center}
\end{minipage}
\begin{minipage}{0.50\textwidth}
\begin{center}
  \scalebox{.9}{\begin{tikzpicture}[xscale=2.5,yscale=2,baseline,>=stealth',every state/.style={minimum size=0.3,inner sep=1}]
\node[state] (0) at (0,0) {$0$};
\node[state] (1) at (1,1) {$1$};
\node[state] (2) at (1,-1) {$2$};
\node[state] (3) at (2,0) {$3$};
\node[state,draw=none] (m) at (-.2,1) {$\M'$:};
\draw[->] (0) to[bend left=45] node[above=1.5mm,pos=0.5] {$\frac{1}{12}, a_1$} (1);
\draw[->] (0) to[bend right=10] node[above=1.5mm,pos=0.4] {$\frac{1}{6}, a_2$} (1);
\draw[->] (0) to[bend right=40] node[right=1mm,pos=0.5] {$\frac{1}{4}, a_3$} (1);
\draw[->] (0) to[bend left =40] node[right=1mm,pos=0.5] {$\frac{1}{4}, a_1$} (2);
\draw[->] (0) to[bend left =10] node[below=0.5mm,pos=0.4] {$\frac{1}{6}, a_2$} (2);
\draw[->] (0) to[bend right=45] node[below=0.5mm,pos=0.5] {$\frac{1}{12}, a_3$} (2);
\draw[->] (1) -- node[above=1mm] {$1, b_1$}(3);
\draw[->] (2) -- node[below=1mm] {$1, b_2$}(3);
\path[->] (3) edge [loop,out=20,in=-20,looseness=13] node[above] {$1, \checkmark$} (3);
\end{tikzpicture}}
	\end{center}
\end{minipage} 
\caption{LMC~$\M$ is constructed from matrix~$M = \big(\begin{smallmatrix} 1/4 && 1/2 && 3/4 \\ 3/4 && 1/2 && 1/4 \end{smallmatrix}\big)$ whereas
 LMC~$\M'$ is obtained by NMF $M = I_2 \cdot M$.}
\label{fig-LMC}
\end{figure}

For $d \in \N$, let $M = W \cdot H$ for some column-stochastic matrices $W \in \mathbb{R}_{+}^{n \times d}$ and $H \in \mathbb{R}_{+}^{d \times m}$. 
Define an LMC $\M^{\prime} = (d+2, \Sigma, \mu^{\prime})$ where
the states are $\{0, 1, \ldots, d, d+1\}$. 
The function  $\mu^{\prime}$, for all $i \in [m]$, $j \in [n]$, and $l \in [d]$, is defined by:
\begin{align*}
	\mu^{\prime}(a_i)_{0, l} &= \textstyle{\frac{1}{m}} H_{l, i},
	&\mu^{\prime}(b_j)_{l, d+1} &= W_{j, l},
	&\mu^{\prime}(\checkmark)_{d+1, d+1} &= 1,
\end{align*}
and all other entries of~$\mu^{\prime}(a_i)$, $\mu^{\prime}(b_j)$, and $\mu^{\prime}(\checkmark)$ are~$0$. 
From the NMF $M = W \cdot H$ it follows that we can factor $\back \M$ as follows:
\vspace{-.2cm}
\renewcommand{\kbldelim}{(}
\renewcommand{\kbrdelim}{)}
\[
\begin{pmatrix}
1 & 0 & \cdots & 0 & 0\\
0 & H_{1, 1} & \cdots & H_{d, 1} & 0\\
\vdots & \vdots & \ddots & \vdots & \vdots\\
0 & H_{1, m} & \cdots & H_{d, m} & 0\\
0 & 0 & \cdots & 0 & 1
\end{pmatrix}
\cdot \!\!\!\!
 \kbordermatrix{
    & \varepsilon & b_{1} & \cdots & b_{n} & \checkmark & a_{i}  & a_{i}b_{j} & b_{j}\checkmark & \checkmark^2 & \cdots\\
        & 1 & 0 & \cdots & 0 & 0 & \frac{1}{m} & \frac{1}{m} M_{j, i} & 0 & 0 & \cdots\\
        & 1 & W_{1, 1} & \cdots & W_{n, 1} & 0 & 0  & 0  & W_{j, 1} & 0 & \cdots\\
        & \vdots & \vdots & \ddots & \vdots & \vdots & \vdots  &  \vdots  & \vdots & \vdots & \cdots\\
        & 1 & W_{1, d} & \cdots & W_{n, d} & 0 & 0  & 0 & W_{j, d} & 0 & \cdots\\
        & 1 & 0 & \cdots & 0 & 1 & 0  & 0  & 0  & 1 & \cdots
  }
\]
where the left factor is row-stochastic (as $H$ is column-stochastic), and the right factor equals $\back \M'$.
It follows that $\M^{\prime} \ge \M$.
\end{proof}

\section{Restricted NMF of Rank-3 Matrices}\label{sec-NPP32}
In this section we consider rational matrices of rank at most $3$.  We
show that for such matrices the restricted nonnegative ranks over $\R$
and~$\Q$ are equal and we give a polynomial-time algorithm that
computes a minimal-dimension RNMF over $\mathbb{Q}$.

\begin{theorem}
  Given a matrix $M\in \mathbb{Q}_{+}^{n \times m}$ where
  $\rk(M) \le 3$, there is a rational RNMF of~$M$ with
  inner dimension $\rrkp(M)$ and it can be computed in polynomial time
  in the Turing model of computation.
\label{thm:polytime}
\end{theorem}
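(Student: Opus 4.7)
My plan is to combine the geometric reformulation of RNMF given by Proposition~\ref{prop:Belgian-reduction} with the $O(n\log n)$ algorithm of Aggarwal et al.~\cite{DBLP:journals/iandc/YapABO89} for the $2$-dimensional NPP, but executed carefully in the Turing model so that the resulting coordinates are rationals of polynomial bit-length. First, I dispose of the degenerate ranks. If $\rk(M)\le 2$, then by Lemma~\ref{lemma:basic_properties_RNR} combined with~\cite[Theorem~4.1]{CohenRothblum93} we already have $\rrkp(M)=\rk(M)$ and a rational factorization of that inner dimension can be read off from a rational rank factorization of~$M$. So the interesting case is $\rk(M)=3$.

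For $\rk(M)=3$, apply reduction~\ref{r:nmf-to-npp} of Proposition~\ref{prop:Belgian-reduction}. In polynomial time in the Turing model this yields an NPP instance in dimension $r-1=2$: a full-dimensional convex polygon $P=\{x\in\R^2\mid Ax+b\ge 0\}$ with rational constraints, and a full-dimensional inner polygon $S\subseteq P$ spanned by rational points $s_1,\dots,s_m$. A rational polygon $Q$ spanned by $d$ points with $S\subseteq Q\subseteq P$ is the same, via reduction~\ref{r:npp-to-nmf}, as a rational $d$-dimensional RNMF of~$M$; thus it suffices to produce a rational nested polygon~$Q$ with $\rrkp(M)$ vertices whose coordinates are of polynomial bit-length.

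The plan is then to run (a Turing-model implementation of) the Aggarwal--Booth--O'Rourke--Suri algorithm. Their algorithm selects an arbitrary point $v_1$ on the boundary of~$P$ and then repeatedly computes, for a current vertex $v_i\in\partial P$, the ``farthest-reachable'' next vertex $v_{i+1}\in\partial P$ so that the chord $[v_i,v_{i+1}]$ still leaves~$S$ on the correct side; the minimum number of vertices in a nested polygon is determined by how many such greedy steps are needed to close the cycle around~$S$, minimised over the starting point~$v_1$. The two basic geometric primitives are (i)~intersecting a supporting line of an edge of~$S$ with an edge of~$P$, and (ii)~computing ``tangent'' chords through a vertex of~$S$; both primitives involve only the solution of $2\times 2$ rational linear systems, so with rational inputs the results are rational with bit-length growing only by an additive polynomial amount. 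The global search over starting points is handled by their parametric-search technique, which in the Turing model reduces to binary searching among a polynomial set of critical rational parameters (edges of~$P$ and vertices of~$S$), so no irrationality is introduced. I will verify that all intermediate quantities can be kept as rationals whose bit-length remains polynomial in the input size.

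The main obstacle I anticipate is exactly this bit-complexity and rationality audit of~\cite{DBLP:journals/iandc/YapABO89}: the published algorithm is stated in the real-RAM model, where parametric search and ``rotating''/``sliding'' a starting vertex along an edge are treated as unit-cost primitives, and one needs to show that a rational optimum exists and is found. Rationality of the optimum is clear once one observes that in an optimal solution each vertex can be taken to be either a vertex of~$P$ or the intersection of an edge of~$P$ with the extension of an edge of~$S$ through a vertex of~$S$---both of which are rational and of polynomial bit-length. The running-time analysis then amounts to checking that each primitive step preserves this polynomial bit bound. Once the minimum-vertex rational nested polygon~$Q$ is computed, reduction~\ref{r:npp-to-nmf} of Proposition~\ref{prop:Belgian-reduction} converts it into a rational RNMF of~$M$ of inner dimension $\rrkp(M)$ in polynomial time, completing the proof.
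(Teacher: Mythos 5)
Your overall architecture matches the paper's (dispose of rank at most $2$, reduce rank-$3$ RNMF to the $2$-dimensional NPP via Proposition~\ref{prop:Belgian-reduction}, then implement the algorithm of~\cite{DBLP:journals/iandc/YapABO89} with exact rational arithmetic), but the step you declare ``clear'' is precisely where the real work lies, and your justification for it is wrong. You claim that an optimal nested polygon can always be taken so that every vertex is either a vertex of $P$ or the intersection of an edge of $P$ with the extension of an edge of $S$ through one of its vertices. This structural claim is unjustified and in general false: in the Aggarwal et al.\ framework, fixing a starting vertex $v_1$ with convex representation $\lambda$ on an edge of $P$ determines the whole supporting chain, and the condition for a $k$-vertex nested polygon is that the slack $s(\lambda)=(r_k\circ\cdots\circ r_1)(\lambda)-\lambda$ be nonnegative, where each $r_i$ is a linear fractional map. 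The feasible set of starting positions for the minimal $k$ need not contain any ``anchored'' position of your special form --- it can sit strictly inside one combinatorial cell of the partition, and in the extreme case it can be a single isolated point $\lambda^*$, namely a zero of the quadratic $(c\lambda+d)s(\lambda)=a\lambda+b-\lambda(c\lambda+d)$. A root of a rational quadratic is typically irrational (the explicit value $2-\sqrt 2$ in Theorem~\ref{thm-NPP3d} arises from exactly this kind of closing condition), so ``all intermediate quantities are solutions of $2\times 2$ rational linear systems'' does not establish rationality of the optimum, and your bit-length audit has nothing to audit until this case is resolved.

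The paper closes this gap with two ingredients your proposal lacks: Lemma~\ref{lem:rationallinfract}, showing each ray function is a rational linear fractional transformation with polynomial-size coefficients (so the slack function on each rational cell is $\frac{a\lambda+b}{c\lambda+d}-\lambda$ with computable rational $a,b,c,d$), and the key case analysis for $s(\lambda)\ge 0$: if a feasible $\lambda$ lies on a cell boundary it is rational; if it is non-isolated there is a rational feasible point nearby; and if it is an isolated interior solution then $s(\lambda^*)=s'(\lambda^*)=0$ forces it to be a \emph{double} root, giving $\lambda^*=\frac{a-d}{2c}\in\Q$ directly. It is only this double-root observation that rules out the irrational simple-root scenario; without it (or some substitute argument proving your anchoring claim, which I do not believe can be proved), your proof of both rationality and polynomial-time computability in the Turing model is incomplete.
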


Using  reduction~\ref{r:nmf-to-npp} of Proposition~\ref{prop:Belgian-reduction}, we
can reduce in polynomial time the RNMF problem for rank-$3$
matrices to the 2-dimensional NPP, \ie, the nested polygon
  problem in the plane.  As noted in Section~\ref{sec_prelims:RNR}, the
correspondence between restricted nonnegative factorizations and
nested polygons preserves rationality.  Thus to prove
Theorem~\ref{thm:polytime} it suffices to prove:
\begin{theorem}\label{thm:minRationalNPP2d}
  Given polygons
  $S \subseteq P \subseteq \mathbb{R}^2$ with
  rational vertices, there exists a minimum-vertex
  polygon~$Q$ nested between $P$ and $S$
  that also has rational vertices. Moreover there is an algorithm that,
  given $P$ and $S$, computes such a polygon in
  polynomial time in the Turing machine model.
\end{theorem}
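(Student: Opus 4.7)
By reduction~\ref{r:nmf-to-npp} of Proposition~\ref{prop:Belgian-reduction}, which preserves rationality, it suffices to prove Theorem~\ref{thm:minRationalNPP2d}. My plan is to revisit the greedy scheme of Aggarwal et al.~\cite{DBLP:journals/iandc/YapABO89} and show that, with a rational choice of starting point on $\partial P$, all subsequent greedy vertices are automatically rational of polynomial bit-length.

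Fix an orientation of $\partial P$. For $v_0 \in \partial P$, define inductively $v_{i+1}$ as the point reached by walking from $v_i$ along $\partial P$ as far as possible under the constraint that the open segment $(v_i,v_{i+1})$ lies in $P \setminus S$. Each such $v_{i+1}$ is then either a vertex of $P$ or the intersection of $\partial P$ with a line through $v_i$ that supports $S$ at a vertex of $S$. Let $k(v_0)$ be the smallest $k$ for which $[v_{k-1},v_0]$ is also admissible; the points $v_0,\ldots,v_{k-1}$ then span a nested polygon. A classical argument of~\cite{DBLP:journals/iandc/YapABO89} shows that $k^* := \min_{v_0 \in \partial P} k(v_0)$ equals the optimum number of vertices, and that the set $U \subseteq \partial P$ of starting points achieving $k^*$ is a finite union of relatively open arcs.

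The key rationality claim is that $U$ is nonempty and that every arc of $U$ has rational endpoints. These endpoints are precisely the combinatorial events at which the greedy chain degenerates, such as some $v_i$ reaching a vertex of $P$, the pivot of a supporting line jumping between two vertices of $S$, or the closing edge $[v_{k-1},v_0]$ becoming tangent to $S$. Each such event is defined by a rational linear equation in the parameter describing $v_0$ along its current edge of $P$, so the endpoints are rational of polynomial bit-length; we may therefore pick a rational $v_0 \in U$ with polynomially bounded bit-length. An induction on $i$ then shows that every subsequent $v_{i+1}$ is rational: it is either a vertex of $P$, or the intersection of a rational edge of $P$ with the rational line through $v_i$ and a rational vertex of $S$. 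Hence $Q = \conv\{v_0,\ldots,v_{k^*-1}\}$ is a rational nested polygon, and all computations required to identify $U$ and run the greedy scheme consist of $O(|P|+|S|)$ rational arithmetic operations on numbers of polynomial bit-length. The main technical obstacle is verifying that $U$ is nonempty and bounded by rational critical events; once this is in hand, adapting~\cite{DBLP:journals/iandc/YapABO89} to the Turing model is routine.
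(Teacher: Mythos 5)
Your overall scheme --- parametrising the starting point $v_0$ along $\partial P$, growing the greedy chain of supporting segments, and propagating rationality through the chain once a rational $v_0$ is found --- is the same as the paper's, and the propagation step itself is sound (each $v_{i+1}$ is the intersection of a rational edge line of $P$ with the line through $v_i$ and a rational vertex of $S$; in the paper this appears as the ray functions of Lemma~\ref{lem:rationallinfract}). The gap is in your key rationality claim about the set $U$ of optimal starting points. The events where the \emph{combinatorics} of the chain changes (a $v_i$ hitting a vertex of $P$, the pivot vertex of $S$ switching) are indeed rational and linear in the edge parameter, and they give the rational partition $\mathcal{I}$ used in the paper. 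But the condition that the chain actually \emph{closes} after $k$ steps is not of this form: writing $\lambda$ for the parameter of $v_0$, the parameter of $v_k$ is $(r_k\circ\cdots\circ r_1)(\lambda)$, a composition of linear fractional transformations, so the closing condition is $s(\lambda)=\frac{a\lambda+b}{c\lambda+d}-\lambda\ge 0$, a \emph{quadratic} inequality. Its boundary zeros need not be rational --- the paper's own slack function $s(\lambda)=\frac{52\lambda-30}{15\lambda-8}-\lambda$ in Figure~\ref{fig-concave-slack} vanishes at $\lambda=2-\sqrt{2}$ --- so your assertion that every arc of $U$ has rational endpoints, derived from "rational linear equations" in $\lambda$, is false as stated.

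This matters because you then select a rational $v_0$ from the interior of an arc of $U$, and $U\cap I$ can fail to have interior: when the minimal nested polygon is rigid, the admissible set inside an interval $I\in\mathcal{I}$ is a single point, namely an isolated zero of $s$. Handling exactly this case is the heart of the paper's proof: at an isolated zero $\lambda^*$ one has $s(\lambda^*)=s'(\lambda^*)=0$, and differentiating $(c\lambda+d)s(\lambda)=a\lambda+b-\lambda(c\lambda+d)$ gives $0=a-d-2c\lambda^*$ with $c\neq 0$ (else $s\equiv 0$), forcing $\lambda^*=\frac{a-d}{2c}\in\mathbb{Q}$. Without this double-root argument (or some substitute covering the isolated case), your proposal does not establish that a rational starting point exists, and hence does not prove the existence of a rational minimum-vertex nested polygon; the non-isolated case (pick a rational point in the interior, or a rational interval endpoint of $\mathcal{I}$) is the easy part and is treated the same way in the paper.
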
 

In fact, Aggarwal et al.~\cite{DBLP:journals/iandc/YapABO89}
give an algorithm for the 2-dimensional NPP and prove that it runs in
polynomial time in the RAM model with unit-cost arithmetic.  However,
they freely use trigonometric functions and do not address the rationality of the output of the algorithm nor
its complexity in the Turing model.  To prove
Theorem~\ref{thm:minRationalNPP2d} we show that, by adopting a
suitable representation of the vertices of a nested polygon, the
algorithm in~\cite{DBLP:journals/iandc/YapABO89} can be adapted so that it runs in
polynomial time in the Turing model.  We furthermore use this
representation to prove that the minimum-vertex nested polygon identified by
the resulting algorithm has rational vertices.

The remainder of the section is devoted to the proof of
Theorem~\ref{thm:minRationalNPP2d}.  We first recall some
terminology from~\cite{DBLP:journals/iandc/YapABO89} and describe
their algorithm.

A \emph{supporting line segment} is a directed line segment, with its
initial and final points on the boundary of the outer polygon $P$,
that touches the inner polygon $S$ on its left.  A nested polygon with
vertices on the boundary of~$P$ is said to be \emph{supporting} if all
but at most one of its edges are supporting line segments.  A polygon
nested between $P$ and $S$ is called \emph{minimal} if it has the
minimum number of vertices among all polygons nested between $P$
and~$S$.  It is shown in~\cite[Lemma 4]{DBLP:journals/iandc/YapABO89}
that there is always a supporting polygon that is also minimal, and
the algorithm given therein outputs such a polygon.

Let $k$ be the number of vertices of a minimal nested polygon.
Given a vertex $v$ on the boundary of $P$, there is a
uniquely defined supporting polygon $Q_v$ with at most $k+1$
vertices. To determine $Q_v$ one computes the supporting
line segments $v_1v_2,\ldots,v_kv_{k+1}$, where $v_1=v$; see Figure~\ref{fig-aggarwal}.
Then $Q_v$ is either the polygon with vertices
$v_1,\ldots,v_k$ or the polygon with vertices $v_1,\ldots,v_{k+1}$.
In the first case, $Q_v$ is minimal.  The idea behind the algorithm
of~\cite{DBLP:journals/iandc/YapABO89} is to search along the
boundary of $P$ for an initial vertex $v$ such that
$Q_v$ is minimal.

As a central ingredient for our proof of Theorem~\ref{thm:minRationalNPP2d}, we choose a convenient
representation of the vertices of supporting polygons.  To this end, we
assume that the edges of~$P$ are oriented counter-clockwise, and we
represent a vertex $v$ on an edge $pq$ of~$P$ by the unique
$\lambda \in [0,1]$ such that $v = (1-\lambda) p+ \lambda q$.  We call
this the \emph{convex representation} of $v$.

Similar to~\cite{DBLP:journals/iandc/YapABO89}, we
associate with each supporting line segment $uv$ a \emph{ray
  function}~$r$, such that if $\lambda$ is the convex representation
of $u$ then $r(\lambda)$ is the convex representation of $v$.  The
same ray function applies for supporting line segments $u'v'$ with $u'$ in a
small enough interval containing $u$.

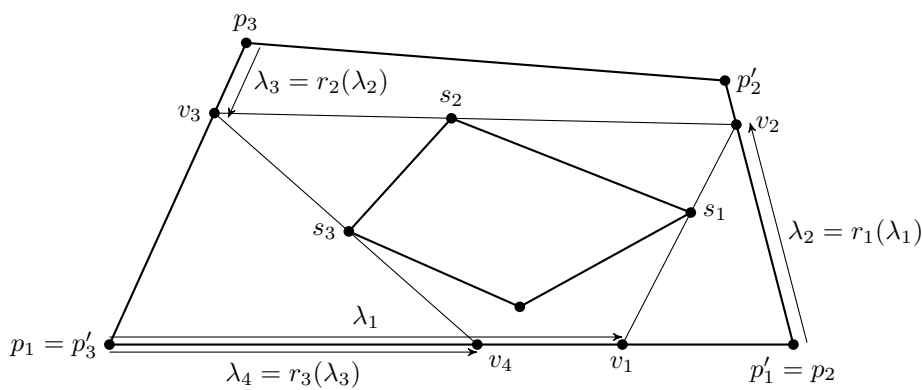
\begin{figure}
\begin{center}
\begin{tikzpicture}[xscale=9,yscale=5,
dot/.style={circle,fill=black,minimum size=4pt,inner sep=0pt,outer sep=-1pt},
]
\coordinate (P1) at (0,0);
\coordinate (P2) at (1,0);
\coordinate (P2p) at (0.9,0.7);
\coordinate (P3) at (0.2, 0.8);
\coordinate (Q1) at (0.85, 0.35);
\coordinate (Q2) at (0.5, 0.6);
\coordinate (Q3) at (0.35, 0.3);
\coordinate (Q4) at (0.6, 0.1);
\coordinate (V1) at (0.75,0);
\coordinate (V2) at  (intersection cs:
       first line={(P2)--(P2p)},
       second line={(V1)--(Q1)});
\coordinate (V3) at  (intersection cs:
       first line={(P3)--(P1)},
       second line={(V2)--(Q2)});
\coordinate (V4) at  (intersection cs:
       first line={(P1)--(P2)},
       second line={(V3)--(Q3)});

\draw[thick] (P1) -- (P2) -- (P2p) -- (P3) -- cycle;
\draw[thick] (Q1) -- (Q2) -- (Q3) -- (Q4) -- cycle;
\draw (V1) -- (V2) -- (V3) -- (V4);
\node[dot,label={[label distance=0mm]180:$p_1=p_3'$}] at (P1) {};
\node[dot,label={[label distance=0mm]-90:$p_1'=p_2$}] at (P2) {};
\node[dot,label={[label distance=0mm]0:$p_2'$}] at (P2p) {};
\node[dot,label={[label distance=0mm]90:$p_3$}] at (P3) {};
\node[dot,label={[label distance=0mm]0:$s_1$}] at (Q1) {};
\node[dot,label={[label distance=0mm]90:$s_2$}] at (Q2) {};
\node[dot,label={[label distance=0mm]180:$s_3$}] at (Q3) {};
\node[dot] at (Q4) {};
\node[dot,label={[label distance=0mm]-90:$v_1$}] at (V1) {};
\node[dot,label={[label distance=1mm]0:$v_2$}] at (V2) {};
\node[dot,label={[label distance=0mm]180:$v_3$}] at (V3) {};
\node[dot,label={[label distance=0mm]-45:$v_4$}] at (V4) {};
\draw[->,>=stealth'] ($ (P1)+(0,0.02) $) -- ($ (V1)+(0,0.02) $) node[midway,above] {$\lambda_1$};
\draw[->,>=stealth'] ($ (P1)-(0,0.02) $) -- ($ (V4)-(0,0.02) $) node[midway,below] {$\lambda_4 = r_3(\lambda_3)$};
\draw[->,>=stealth'] ($ (P2)+(0.02,0.005) $) -- ($ (V2)+(0.02,0.005) $) node[midway,right] {$\lambda_2 = r_1(\lambda_1)$};
\draw[->,>=stealth'] ($ (P3)+(0.02,-0.012) $) -- ($ (V3)+(0.02,-0.012) $) node[midway,right] {$\lambda_3 = r_2(\lambda_2)$};
\end{tikzpicture}
\end{center}
\caption{Supporting polygon $Q_{v_{1}}$. For every $i \in [3]$, vertex $v_{i}$ lies on edge $p_{i} p_{i}^{\prime}$ of $P$, and $s_{i}$ is the point where the supporting line segment $v_{i} v_{i+1}$ touches the inner polygon $S$ on its left.}
\label{fig-aggarwal}
\end{figure}

\newcommand{\stmtlemrationallinfract}{
 Consider bounded polygons $S \subseteq P \subseteq \mathbb{R}^2$
  whose vertices are rational and of bit-length $L$.  Then the ray
  function associated with a supporting line segment $uv$ has the form
  $r(\lambda) = \frac{a\lambda + b}{c\lambda + d}$, where coefficients
  $a,b,c,d$ are rational numbers with bit-length $O(L)$ that can be computed in polynomial time.
}
In the following, when we say polynomial time, we mean polynomial time in the Turing model.
To obtain a polynomial time bound, the key lemma is as follows:
\begin{lemma}\label{lem:rationallinfract}
 \stmtlemrationallinfract
\end{lemma}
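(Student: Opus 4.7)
The plan is to derive the ray function $r$ directly from the geometric setup. A supporting line segment $uv$ passes through some touching point $s$ on the boundary of $S$; because $S$ is a polygon and $uv$ keeps $S$ on its left, we may always take $s$ to be a \emph{vertex} of $S$ (if the segment contains an edge of $S$, either endpoint of that edge works), so $s$ is rational of bit-length $O(L)$. The endpoint $u$ lies on some edge $p_1 p_1'$ of $P$, and $v$ lies on some edge $p_2 p_2'$ of $P$; the fact that the same ray function governs nearby segments means that the triple $(p_1 p_1', s, p_2 p_2')$ is fixed throughout the relevant interval of~$\lambda$.

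Next I would write $u(\lambda) = (1-\lambda) p_1 + \lambda p_1'$ and $v(\mu) = (1-\mu) p_2 + \mu p_2'$, so each coordinate is rational-affine in its parameter. Collinearity of $u(\lambda), s, v(\mu)$ is equivalent to the single scalar equation $\det\bigl(s - u(\lambda),\, v(\mu) - u(\lambda)\bigr) = 0$, which is linear in~$\mu$. Applying Cramer's rule yields
\[
\mu \;=\; \frac{\det\bigl(s - u(\lambda),\, p_2 - u(\lambda)\bigr)}{\det\bigl(s - u(\lambda),\, p_2' - p_2\bigr)}.
\]
The denominator is visibly affine in~$\lambda$, since only one column depends on $\lambda$. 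In the numerator, both columns are affine in~$\lambda$ with the \emph{same} linear part $-(p_1' - p_1)$, so the $\lambda^2$-contribution to the determinant equals $\det(p_1' - p_1,\, p_1' - p_1) = 0$ and vanishes. Hence $r(\lambda) = (a\lambda + b)/(c\lambda + d)$ with explicit rational coefficients $a, b, c, d$ expressible as polynomials of total degree two in the coordinates of $p_1, p_1', p_2, p_2', s$.

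For the bit-length and running time, each of $a, b, c, d$ is a sum of a constant number of products of pairs of rational numbers of bit-length $O(L)$, so has bit-length $O(L)$ itself and is computable in polynomial time in the Turing model by standard bounds on rational arithmetic. The main (though minor) obstacle in the argument is the $\lambda^2$-cancellation in the numerator determinant: this is exactly what distinguishes a linear fractional function from a general quadratic-over-linear rational function, and it is what keeps the bit-length linear rather than quadratic in~$L$. Once this cancellation is noted, the rest of the proof is a routine verification using Cramer's rule and standard complexity bounds for rational arithmetic.
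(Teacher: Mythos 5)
Your proof is correct, and it takes a genuinely different route from the paper's. The paper proves Lemma~\ref{lem:rationallinfract} synthetically: it distinguishes two cases according to whether the lines through $p_1p_1'$ and $p_2p_2'$ are parallel or meet in a point $t$, introduces auxiliary intersection points, and extracts the coefficients of $r$ from similar-triangle ratios, obtaining in each case an explicit linear (resp.\ linear fractional) expression whose coefficients are built from constantly many rational operations on the input. You instead work purely algebraically: parametrize $u(\lambda)$ and $v(\mu)$ affinely on their edges, fix the touching vertex $s$ of $S$ (a legitimate normalization, since a supporting line of the convex polygon $S$ meets it in a vertex or an edge, and in the edge case either endpoint serves), and encode collinearity as $\det\bigl(s-u(\lambda),\,v-u(\lambda)\bigr)=0$, which is linear in $\mu$; solving gives $\mu$ as a ratio of two determinants, and the key point---that both columns of the numerator determinant have the same linear part $-\lambda(p_1'-p_1)$, so the $\lambda^2$ term cancels---shows both numerator and denominator are affine in $\lambda$, yielding the linear fractional form uniformly, with no case split and with coefficients given explicitly as degree-two polynomials in the input coordinates, whence the $O(L)$ bit-length and polynomial-time computability are immediate. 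What your approach buys is uniformity (no parallel/non-parallel case distinction) and closed-form coefficients; what the paper's buys is an elementary picture-level argument matching its figures. One trivial slip: solving $\det(s-u,\,p_2-u)+\mu\,\det(s-u,\,p_2'-p_2)=0$ gives $\mu=-\det(s-u,\,p_2-u)/\det(s-u,\,p_2'-p_2)$, so your displayed formula is off by a sign (equivalently, swap the columns in one determinant); this does not affect the structure or the bit-length analysis.
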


Suppose that $v_1v_2,\ldots,v_kv_{k+1}$ is a sequence of $k$
supporting line segments, with corresponding ray functions $r_1,\ldots,r_k$.
Then $v_1,\ldots,v_k$ are the vertices of a minimal supporting polygon
if and only if $(r_k \circ \ldots \circ r_1)(\lambda) \geq \lambda$,
where $\lambda$ is the convex representation of $v_1$.

It follows from~\cite{DBLP:journals/iandc/YapABO89} that, for each edge of~$P$, one can compute in polynomial time a partition~$\mathcal{I}$ of $[0,1]$ into intervals with rational endpoints such that if $\lambda_1, \lambda_2$ are in the same interval $I \in \mathcal{I}$ then the points with convex representation $\lambda_1$ and~$\lambda_2$ are associated with the same sequence of ray functions $r_1, \ldots, r_k$. Using Lemma~\ref{lem:rationallinfract} we can, for each interval $I \in \mathcal{I}$, compute these ray functions in polynomial time.
Define the \emph{slack function} \label{pageref-slack} $s(\lambda)= (r_k \circ \ldots \circ r_1)(\lambda)-\lambda$.
In fact, this function has the form
$s(\lambda)=\frac{a\lambda+b}{c\lambda+d}-\lambda$ for rational numbers $a,b,c,d$ that are also computable in polynomial time\iftechrep{ (see Corollary~\ref{cor:bitsizeNestedVertices} in Appendix~\ref{app-NPP2d})}{}.
Then it is straightforward to check whether $s(\lambda) \ge 0$ has a solution $\lambda \in I$.

Next we show that if such a solution exists, then there exists a rational solution, which, moreover, can be computed in polynomial time. To this end, let $\lambda^* \in I$ be such that $s(\lambda^*) \ge 0$. If $\lambda^*$ is on the boundary of~$I$, then $\lambda^* \in \Q$. If $\lambda^*$ is not on the boundary and is not an isolated solution, then there exists a rational solution in its neighbourhood. Lastly, let $\lambda^*$ be an isolated solution not on the boundary.
Then, $\lambda^*$ is a root of both $s$ and its derivative~$s'$. For every $\lambda \in I$, we have
\begin{align*}
(c\lambda + d) \cdot s(\lambda) =  a\lambda + b - \lambda \cdot (c\lambda + d).
\end{align*} 
Taking the derivative of the above equation with respect to $\lambda$, we get
\begin{align}
c \cdot s(\lambda)  + (c\lambda + d) \cdot s^{\prime} (\lambda) =  a - d - 2c\lambda. \label{eq:slackderivative}
\end{align} 
Since $s(\lambda^*) = s'(\lambda^*) = 0$, from (\ref{eq:slackderivative}) we get $0 = a - d - 2c \lambda^*$.
Note that $c \neq 0$ since otherwise $s \equiv 0$. Therefore,
$\lambda^* = \frac{a-d}{2c} \in \mathbb{Q}$.

It follows that the vertex~$v$ represented by~$\lambda^*$ has rational coordinates computable in polynomial time.
By computing $(r_i \circ \ldots \circ r_1)(\lambda^*)$ for $i \in [k]$, we can compute in polynomial time the convex representation of all vertices of the supporting polygon~$Q_v$.
Observe, in particular, that all vertices are rational.
Hence we have proved Theorem~\ref{thm:minRationalNPP2d}.

\newcommand{\objectThreeD}[4]{
\pgfmathsetmacro{\phiy}{#1} 
\pgfmathsetmacro{\phix}{#2} 
\pgfmathsetmacro{\phiz}{#3} 
\pgfmathsetmacro{\xa}{cos(\phiz)*cos(\phiy)-sin(\phiz)*cos(\phix)*sin(\phiy)}
\pgfmathsetmacro{\xb}{-sin(\phix)*sin(\phiy)}
\pgfmathsetmacro{\ya}{-sin(\phiz)*sin(\phix)}
\pgfmathsetmacro{\yb}{cos(\phix)}
\pgfmathsetmacro{\za}{-cos(\phiz)*sin(\phiy)-sin(\phiz)*cos(\phix)*cos(\phiy)}
\pgfmathsetmacro{\zb}{-sin(\phix)*cos(\phiy)}
\begin{tikzpicture}[x  = {(\xa cm,\xb cm)},
                    y  = {(\ya cm,\yb cm)},
                    z  = {(\za cm,\zb cm)},
                    scale = #4,
                    dot/.style={circle,fill=black,minimum size=4pt,inner sep=0pt,outer sep=-1pt},
]
\draw[line join=round,thick,fill=brown!90] (0,0,0) coordinate (O1) -- (1,0,0) coordinate (O2) -- (1,1/2,0) coordinate (O3) -- (0,1,0) coordinate (O4) -- cycle;
\draw[line join=round,thick,fill=blue!50]  (O1) -- (O2) -- (9/4,0,1/2) coordinate (O5) -- (0,0,8/7) coordinate (O6) -- cycle;

\draw[thick,->, >=angle 60] (O1) -- (O2) node[below] {$x$};
\draw[thick,->, >=angle 45] (O1) -- (O4) node[left] {$y$};
\draw[thick,->, >=angle 90] (O1) -- (O6) node[left] {$z$};

\coordinate (I1) at (3/4,1/8,0);
\coordinate (I2) at (3/4,1/2,0);
\coordinate (I3) at (3/11,17/22,0);
\coordinate (I4) at (2,0,1/2);
\coordinate (I5) at (1/2,0,3/4);
\coordinate (I6) at (1/6,0,7/12);
\foreach \x in {1,2,...,6} \node[dot] at (I\x) {};

\coordinate (M1) at ({2-sqrt(2)}, 0,0);
\coordinate (M2) at  (intersection cs:
       first line={(M1)--(I1)},
       second line={(O2)--(O3)});
\coordinate (M3) at  (intersection cs:
       first line={(M2)--(I2)},
       second line={(O3)--(O4)});
\coordinate (M4) at  (intersection cs:
       first line={(M1)--(I4)},
       second line={(O5)--(O6)});
\coordinate (M5) at  (intersection cs:
       first line={(M4)--(I5)},
       second line={(O1)--(O6)});

\draw (M1) -- (M2) -- (M3) -- cycle;
\draw (M1) -- (M4) -- (M5) -- cycle;

\draw[line join=round,thick,fill=yellow!80]  (O2) -- (O3) -- (O5) -- cycle;
\draw[line join=round,thick,fill=red!40]  (O1) -- (O4) -- (O6) -- cycle;
\draw[line join=round,thick,fill=green!90,opacity=0.4, draw opacity=1]  (O3) -- (O4) -- (O5) -- cycle;
\draw[line join=round,thick,fill=orange!60,opacity=0.35, draw opacity=1]  (O4) -- (O5) -- (O6) -- cycle;
\end{tikzpicture}
}

\section{Restricted NMF Requires Irrationality} \label{sec-NPP3d}

Here we show that the restricted nonnegative ranks over $\R$ and~$\Q$ are, in general, different.
\begin{theorem} \label{thm-NPP3d}
Let
\[
M = 
\begin{pmatrix}
1/8   & 1/2   & 17/22 & 0    & 0     & 0     \\
0     & 0     & 0     & 1/2  & 3/4   & 7/12  \\
3/4   & 3/4   & 3/11  & 2    & 1/2   & 1/6   \\
1/4   & 1/4   & 8/11  & 1/4  & 19/8  & 55/24 \\
1/2   & 1/8   & 1/11  & 1/8  & 15/16 & 17/16 \\
11/16 & 5/16  & 7/44  & 1/16 & 7/32  & 43/96
\end{pmatrix}
\in \Q_+^{6 \times 6}\,.
\]
The restricted nonnegative rank of~$M$ over~$\R$ is~$5$.
The restricted nonnegative rank of~$M$ over~$\Q$ is~$6$.
\end{theorem}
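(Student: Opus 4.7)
The plan is to apply reduction~\ref{r:nmf-to-npp} of Proposition~\ref{prop:Belgian-reduction} and analyse the resulting 3-dimensional NPP. A direct computation gives $\rk(M)=4$. Extracting $A\in\Q^{6\times 3}$, $b\in\Q^6$ and rational points $I_1,\ldots,I_6\in\R^3$ with $M^i=AI_i+b$ identifies the outer polytope $P=\{x : Ax+b\ge 0\}$ as the one with vertices $O_1=(0,0,0)$, $O_2=(1,0,0)$, $O_3=(1,1/2,0)$, $O_4=(0,1,0)$, $O_5=(9/4,0,1/2)$, $O_6=(0,0,8/7)$, and the inner polytope as $S=\conv\{I_1,\ldots,I_6\}$ with $I_1=(3/4,1/8,0)$, $I_2=(3/4,1/2,0)$, $I_3=(3/11,17/22,0)$ lying in the facet $F_z=P\cap\{z=0\}$ and $I_4=(2,0,1/2)$, $I_5=(1/2,0,3/4)$, $I_6=(1/6,0,7/12)$ in $F_y=P\cap\{y=0\}$. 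By Proposition~\ref{prop:Belgian-reduction} the real (resp.\ rational) restricted nonnegative rank of $M$ equals the minimum vertex count of a real (resp.\ rational) polytope nested between $S$ and $P$.

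For $\rrkp(M)\le 5$, let $\lambda^*=2-\sqrt 2$, set $M_1=(\lambda^*,0,0)\in O_1O_2$, and define $M_2,\ldots,M_5$ by the chain construction illustrated in the figure preceding the theorem: $M_2\in O_2O_3$ lies on line $M_1I_1$, $M_3\in O_3O_4$ on line $M_2I_2$, and symmetrically $M_4\in O_5O_6$, $M_5\in O_1O_6$ lie on lines $M_1I_4$ and $M_4I_5$. Exploiting $(\lambda^*)^2-4\lambda^*+2=0$, a direct algebraic check shows that line $M_3M_1$ passes through $I_3$ and line $M_5M_1$ through $I_6$. Hence the triangular faces $M_1M_2M_3\subset F_z$ and $M_1M_4M_5\subset F_y$ of $Q:=\conv\{M_1,\ldots,M_5\}$ contain $\conv\{I_1,I_2,I_3\}$ and $\conv\{I_4,I_5,I_6\}$ respectively, so $S\subseteq Q\subseteq P$.

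For $\rrkp(M)\ge 5$, let $Q$ be a nested polytope and write $a$, $b$, $c$ for the number of its vertices in $\{z=0\}$, in $\{y=0\}$, and in both (hence on edge $O_1O_2$). Since $Q\cap\{z=0\}$ must contain the 2-dimensional triangle $\conv\{I_1,I_2,I_3\}$ and $\{z=0\}$ is a supporting hyperplane of $P$, we have $a\ge 3$; symmetrically $b\ge 3$. If $Q$ has at most $4$ vertices then $a+b-c\le 4$ forces $c\ge 2$, so the face of $Q$ inside $F_y$ is a triangle with two vertices on $O_1O_2$ and a single apex $D=(x_D,0,z_D)$. Containment of $I_4=(2,0,1/2)$ forces $x_D\ge 2z_D+1$, which combined with the facet inequality $2x+7z\le 8$ yields $z_D\le 6/11 < 3/4$, contradicting containment of $I_5=(1/2,0,3/4)$. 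Hence $\rrkp(M)=5$.

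For the restricted nonnegative rank of $M$ over $\Q$, the same combinatorial analysis applied to 5-vertex polytopes forces either $a=b=3,c=1$ or a sub-case with $c\ge 2$; every sub-case with $c\ge 2$ produces a triangular face (in $F_z$ or in $F_y$) with two base vertices on $O_1O_2$ whose apex is geometrically incapable of accommodating the three inner-triangle vertices (the $F_y$-width argument above; the symmetric $F_z$-argument shows that simultaneous containment of $I_2$ and $I_3$ yields conflicting constraints $c_x\ge 2/3$ and $c_x\le 3/10$ on the apex). The remaining case $a=b=3, c=1$ produces a shared vertex $M_1=(\lambda,0,0)$ and two 2D nested-triangle sub-problems, whose chain construction (Section~\ref{sec-NPP32}, Lemma~\ref{lem:rationallinfract}) reduces via fractional-linear ray functions to the same quadratic closure condition $\lambda^2-4\lambda+2=0$. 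A sign check (at $\lambda=0$ and $\lambda=1/2$) shows the $F_z$-face is valid only for $\lambda\in[2-\sqrt 2,2/3]$ and the $F_y$-face only for $\lambda\in[0,2-\sqrt 2]$, so their intersection is the single irrational point $\lambda^*=2-\sqrt 2$. Hence no rational 5-vertex polytope is nested between $S$ and $P$, while $S$ itself is a 6-vertex rational nested polytope, giving the restricted nonnegative rank of $M$ over $\Q$ equal to $6$. The main obstacle is this final step: verifying both closure computations, confirming the feasibility intervals, and ruling out all alternative combinatorial structures via the width arguments.
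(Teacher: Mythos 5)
Your proposal follows essentially the same route as the paper: reduce to the same $3$-dimensional NPP instance, exhibit the $5$-point nested polytope whose $x$-axis vertex is $(2-\sqrt{2},0,0)^\top$, and force any $5$-point (hence any restricted) solution to contain that irrational vertex by analysing the two planar sub-problems via ray functions and slack functions, the closure condition being the quadratic $\lambda^2-4\lambda+2=0$ with opposite-sided feasibility on the two faces. Your counting and width inequalities (e.g.\ $x_D \ge 2z_D+1$ against $2x+7z\le 8$, and $c_x\ge 2/3$ versus $c_x\le 3/10$), which check out numerically, are just a more quantitative packaging of the paper's argument that at most one spanning point can lie on the $x$-axis, so the substance of the proof is the same.
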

\begin{proof}
Matrix~$M$ has an NMF $M = W \cdot H$ with inner dimension~$5$ with $W, H$ as follows:
\begin{align*}
W
&= 
\begin{pmatrix}
0 & \frac{3 + \sqrt{2}}{14} & \frac{11 + \sqrt{2}}{14} & 0 & 0 \\
0 & 0 & 0 & \frac{12 - 2 \sqrt{2}}{17} & \frac{5}{7} + \frac{\sqrt{2}}{14} \\
2 - \sqrt{2} & 1 & \frac{3 - \sqrt{2}}{7} & \frac{26 + 7 \sqrt{2}}{17} & 0 \\
-1 + \sqrt{2} & 0 & \frac{4 + \sqrt{2}}{7} & \frac{21 - 12 \sqrt{2}}{17} & \frac{39}{14} + \frac{5 \sqrt{2}}{28} \\
\frac{\sqrt{2}}{2} & \frac{2}{7} - \frac{\sqrt{2}}{14} & 0 & \frac{7 - 4 \sqrt{2}}{17} & \frac{33}{28} + \frac{\sqrt{2}}{56} \\
\frac{1}{2} + \frac{\sqrt{2}}{4} & \frac{15}{28} - \frac{\sqrt{2}}{14} & \frac{3 - \sqrt{2}}{28} & 0 & \frac{3}{8} - \frac{\sqrt{2}}{16}
\end{pmatrix}, \\
H
&=
\begin{pmatrix}
\frac{1+\sqrt{2}}{4} & 0 & \frac{\sqrt{2}}{11} & \frac{1}{4} - \frac{\sqrt{2}}{8} & 0 & \frac{1}{6} + \frac{\sqrt{2}}{12} \\
\frac{3 - \sqrt{2}}{4} & \frac{1}{2} + \frac{\sqrt{2}}{8} & 0 & 0 & 0 & 0 \\
0 & \frac{1}{2} - \frac{\sqrt{2}}{8} & 1 - \frac{\sqrt{2}}{11} & 0 & 0 & 0 \\
0 & 0 & 0 & \frac{3}{4} + \frac{\sqrt{2}}{8} & \frac{13}{34} - \frac{7 \sqrt{2}}{68} & 0 \\
0 & 0 & 0 & 0 & \frac{21}{34} + \frac{7 \sqrt{2}}{68} & \frac{5}{6} - \frac{\sqrt{2}}{12}
\end{pmatrix}.
\end{align*}
Since $\rk(M) = \rk(W) = 4$, the NMF $M = W \cdot H$ is restricted.
This RNMF has been obtained by reducing, according to Proposition~\ref{prop:Belgian-reduction}, an NPP instance, which we now describe.

\begin{figure}
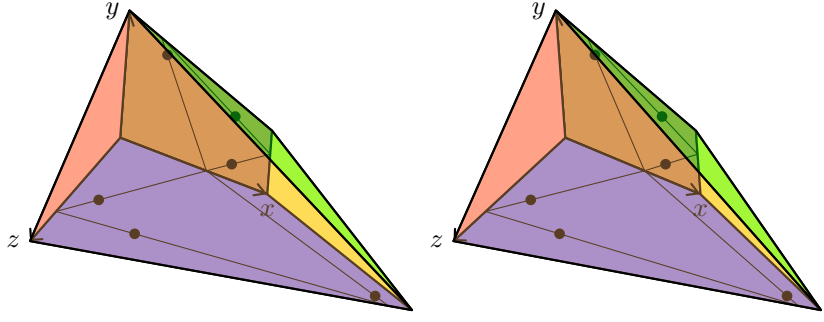

\begin{center}
\objectThreeD{32}{40}{-5}{2.2}\hspace{-13mm}
\objectThreeD{32}{40}{5}{2.2}
\end{center}
\caption{Instance of the nested polytope problem. The two images show orthogonal projections of a 3-dimensional  outer polytope~$P$. The black dots indicate 6 inner points (3 on the brown $xy$-face, and 3 on the blue $xz$-face) that span the interior polytope~$S$. The two triangles on the $xy$-face and on the $xz$-face indicate the (unique) location of 5~points that span the nested polytope~$Q$.
The two slightly different projections are designed to create a 3-dimensional impression using stereoscopy. The ``parallel-eye'' technique should be used, see, e.g., \cite{DSimanekParallelEye}.
See \iftechrep{Figure~\ref{fig-3d-cross-eyed} in Appendix~\ref{app-NPP3d}}{the full version} for a  ``cross-eyed'' variant.
}
\label{fig-3d-parallel-eyed}
\end{figure} 

Figure~\ref{fig-3d-parallel-eyed} shows the outer 3-dimensional polytope~$P$ with 6 faces.
The polytope~$P$ is the intersection of the following half-spaces:
$y \ge 0$ (blue), 
$z \ge 0$ (brown),
$x \ge 0$ (pink),
$-x + \frac{5}{2} z + 1 \ge 0$ (yellow),
$-\frac{1}{2} x - y + \frac{1}{4} z + 1 \ge 0$ (green),
$-\frac{1}{4} x - y - \frac{7}{8} z + 1 \ge 0$ (transparent front).
The figure also indicates an interior polytope~$S$ spanned by 6 points (black dots):
$s_1 = (\frac{3}{4}, \frac{1}{8}, 0)^\top$, 
$s_2 = (\frac{3}{4}, \frac{1}{2}, 0)^\top$,
$s_3 = (\frac{3}{11}, \frac{17}{22},0)^\top$,
$s_4 = (2,0,\frac{1}{2})^\top$, 
$s_5 = (\frac{1}{2},0,\frac{3}{4})^\top$, 
$s_6 = (\frac{1}{6},0,\frac{7}{12})^\top$.
In the following we discuss possible locations of 5 points $q_1, q_2, q_3, q_4, q_5$ that span a nested polytope~$Q$.
Since $s_1, s_2, s_3$ all lie on the (brown) face on the $xy$-plane, but not on a common line, at least 3 of the~$q_i$ must lie on the $xy$-plane.
A similar statement holds for $s_4, s_5, s_6$ and the $xz$-plane.
So at least one~$q_i$, say~$q_1$, must lie on the $x$-axis.

Suppose another~$q_i$, say~$q_2$, lies on the $x$-axis.
Without loss of generality we can take $q_1 = (0,0,0)^\top$ and $q_2 = (1,0,0)^\top$, as all points in~$P$ on the $x$-axis are enclosed by these $q_1$, $q_2$.
%
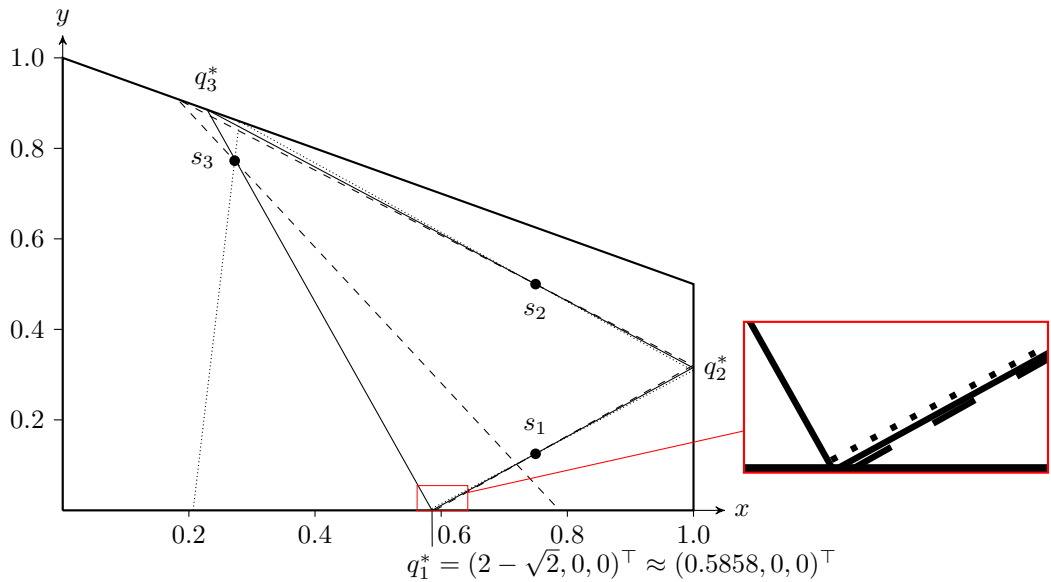
\begin{figure}
\begin{center}
\begin{tikzpicture}[
xscale=8.3,
yscale=6,
dot/.style={circle,fill=black,minimum size=4pt,inner sep=0pt,outer sep=-1pt},
spy using outlines={rectangle, magnification=6, width=4cm, height=2cm, connect spies}
]
\draw[->, >=stealth'] (0,0) -- (1.05,0) node[right] {$x$};
\draw[->, >=stealth'] (0,0) -- (0,1.05) node[above] {$y$};
\foreach \x in {0.2,0.4,0.8,1.0}
    \draw[shift={(\x,0)}] (0pt,0pt) -- (0pt,-0.4pt) node[below] {$\x$};
\draw[shift={(0.6,0)}] (0pt,0pt) -- (0pt,-0.4pt) node[below,xshift=5pt] {$0.6$};
\foreach \x in {0.2,0.4,0.6,0.8,1.0}
    \draw[shift={(0,\x)}] (0,0) -- (-0.4pt,0) node[left] {$\x$};
\draw[thick] (0,0) coordinate (O1) -- (1,0) coordinate (O2) -- (1,1/2) coordinate (O3) -- (0,1) coordinate (O4) -- (O1);
\coordinate (I1) at (3/4,1/8);
\coordinate (I2) at (3/4,1/2);
\coordinate (I3) at (3/11,17/22);
\node[dot,label={[label distance=1mm]90:$s_1$}] at (I1) {};
\node[dot,label={[label distance=1mm]-90:$s_2$}] at (I2) {};
\node[dot,label={[label distance=1mm]180:$s_3$}] at (I3) {};
\coordinate (M1) at ({2-sqrt(2)}, 0);
\coordinate (M2) at  (intersection cs:
       first line={(M1)--(I1)},
       second line={(O2)--(O3)});
\coordinate (M3) at  (intersection cs:
       first line={(M2)--(I2)},
       second line={(O3)--(O4)});
\node[label={[label distance=-1mm]0:$q_2^*$}] at (M2) {};
\node[label={90:$q_3^*$}] at (M3) {};
\draw (M1) -- (M2) -- (M3) -- (M1);
\draw (M1) -- ({2-sqrt(2)}, -0.08) node[below,xshift=25mm,yshift=1mm] {$q_1^* = (2-\sqrt{2},0,0)^\top \approx (0.5858,0,0)^\top$};

\coordinate (M1+) at (0.59, 0);
\coordinate (M2+) at  (intersection cs:
       first line={(M1+)--(I1)},
       second line={(O2)--(O3)});
\coordinate (M3+) at  (intersection cs:
       first line={(M2+)--(I2)},
       second line={(O3)--(O4)});
\coordinate (M4+) at  (intersection cs:
       first line={(M3+)--(I3)},
       second line={(O1)--(O2)});
\draw[dashed] (M1+) -- (M2+) -- (M3+) -- (M4+);

\coordinate (M1-) at (0.58, 0);
\coordinate (M2-) at  (intersection cs:
       first line={(M1-)--(I1)},
       second line={(O2)--(O3)});
\coordinate (M3-) at  (intersection cs:
       first line={(M2-)--(I2)},
       second line={(O3)--(O4)});
\coordinate (M4-) at  (intersection cs:
       first line={(M3-)--(I3)},
       second line={(O1)--(O2)});
\draw[densely dotted] (M1-) -- (M2-) -- (M3-) -- (M4-);

\coordinate (M1spy) at (0.602,0.027);
\spy [red] on (M1spy) in node [right] at (1.08,0.25);
\end{tikzpicture}
\end{center}
\caption{Detailed view of the $xy$-plane.
The outer quadrilateral is one of 6 faces of~$P$, the brown face in Figure~\ref{fig-3d-parallel-eyed}.
The points $s_1, s_2, s_3$ are among the 6 points that span the inner polytope~$S$.
The points $q_1^*, q_2^*, q_3^*$ are among the 5 points that span the nested polytope~$Q$.
The area around~$q_1^*$ is zoomed in on the right-hand side.
The picture illustrates that $q_1^*$ cannot be moved left on the $x$-axis without increasing the number of vertices of the nested polytope:
A dotted ray from a point slightly to the left of~$q_1^*$ is drawn through~$s_1$.
Its intersection with the line $x = 1$ is slightly below~$q_2^*$.
Following the algorithm of~\cite{DBLP:journals/iandc/YapABO89}, the dotted ray is continued in a similar fashion, ``wrapping around'' $s_2$~and~$s_3$, and ending on the $x$-axis at around $x \approx 0.2$, far left of the starting point.
On the other hand, the dashed line illustrates that $q_1^*$ could be moved right (considering only this face).
}
\label{fig-concave-display}
\end{figure}
Figure~\ref{fig-concave-display} provides a detailed view of the $xy$-plane.
To enclose~$s_2$, some $q \in \{q_3, q_4, q_5\}$ must also lie on the $xy$-plane and  to the right of the line that connects $q_2 = (1,0,0)^\top$ and~$s_2$.
To enclose~$s_3$, some $q' \in \{q_3, q_4, q_5\}$ must also lie on the $xy$-plane and  to the left of the line that connects $q_1 = (0,0,0)^\top$ and~$s_3$.
If $q$ and~$q'$ were identical then they would lie outside~$P$---a contradiction.
Hence 4 points (namely, $q_1, q_2, q, q'$) are on the $xy$-plane.
This leaves only one point, say~$q''$, that is not on the $xy$-plane.
To enclose~$s_4$ (see \iftechrep{Figure~\ref{fig-convex-display} in Appendix~\ref{app-NPP3d}}{the corresponding figure in the full version}), point~$q''$ must lie on the $xz$-plane and must lie to the right of the line that connects $q_2 = (1,0,0)^\top$ and~$s_4$.
To enclose~$s_6$, point~$q''$ must lie to the left of the line that connects $q_1 = (0,0,0)^\top$ and~$s_6$.
Hence $q''$ lies outside~$P$---a contradiction.

Hence we have shown that only one point, say~$q_1$, lies on the $x$-axis, and two points besides~$q_1$, say $q_2, q_3$, lie on the $xy$-plane, and two points besides~$q_1$, say $q_4, q_5$, lie on the $xz$-plane.
Figure~\ref{fig-concave-display} indicates a possible location ($q_1^*, q_2^*, q_3^*$) of $q_1, q_2, q_3$.
The figure illustrates that the $x$-coordinate of~$q_1^*$ must be at least $2-\sqrt{2}$.
\begin{figure}
\begin{center}
\begin{tikzpicture}[xscale=700,yscale=7]
\draw[->, >=stealth'] (0.579,0) -- (0.591,0) node[right] {$\lambda$};
\foreach \x in {0.58,0.582,0.584,0.588,0.59}
    \draw[shift={(\x,0)}] (0pt,0pt) -- (0pt,-0.4pt) node[below] {$\x$};
\draw[shift={(0.586,0)}] (0pt,0pt) -- (0pt,-0.4pt) node[below,xshift=7pt] {$0.586$};
\draw[->, >=stealth'] (0.579,-0.35) -- (0.579,0.35) node[above] {}; 
\foreach \x in {-0.3,-0.2,-0.1,0,0.1,0.2,0.3}
    \draw[shift={(0.579,\x)}] (0,0) -- (-0.004pt,0) node[left] {$\x$};
\draw[smooth,samples=94,domain=0.58:0.59]
   plot(\x,{-15*(-4*\x+2+\x*\x)/(15*\x-8)});
\node[left] at (0.5883,0.15) {$s(\lambda)$};
\draw ({2-sqrt(2)},0) -- ({2-sqrt(2)}, -4pt) node[below] {$2-\sqrt{2} \approx 0.5858$};
\end{tikzpicture}
\end{center}
\caption{The slack function $s(\lambda) = \frac{52 \lambda - 30}{15 \lambda - 8} - \lambda$ corresponding to Figure~\ref{fig-concave-display}.
When $s(\lambda)<0$, there is no nested triangle with vertex $(\lambda,0,0)$.
}
\label{fig-concave-slack}
\end{figure}
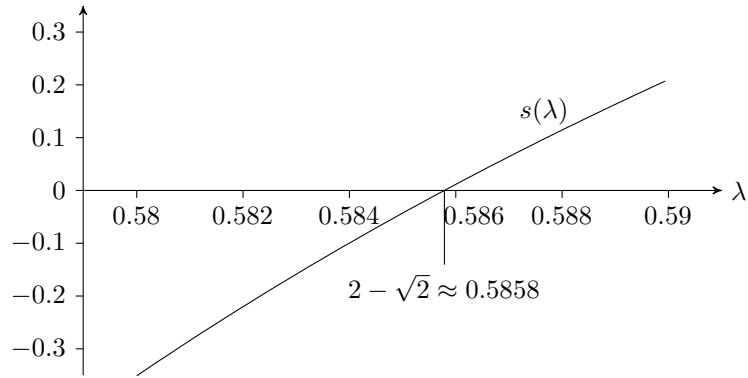 
Figure~\ref{fig-concave-slack} illustrates how to prove the same fact more formally, using the concept of a slack function (see 
Section~\ref{sec-NPP32}):
The slack function~$s(\lambda)$ for the interval containing $2-\sqrt{2}$ has a zero at $\lambda = 2 - \sqrt{2}$, with a sign change from negative to positive.
An inspection of the intervals (of the partition~$\mathcal{I}$ from Section~\ref{sec-NPP32}) to the ``left'' of $2 - \sqrt{2}$ reveals that none of the corresponding slack functions~$\tilde{s}$ satisfies $\tilde{s}(\lambda) \ge 0$ for $\lambda < 2 - \sqrt{2}$.
Similarly, 
the $x$-coordinate of~$q_1^*$ must be \emph{at most} $2-\sqrt{2}$,
\iftechrep{as illustrated by Figures~\ref{fig-convex-display}~and~\ref{fig-convex-slack} in Appendix~\ref{app-NPP3d}}{see corresponding figures in the full version}.
Hence $q_1^* = (2-\sqrt{2}, 0, 0)^\top$ is necessary.
This uniquely (up to permutations) determines $q_2^*, q_3^*$ and similarly the locations $q_4^*, q_5^*$ of $q_4, q_5$.
With the reduction from Proposition~\ref{prop:Belgian-reduction} this NPP solution determines the RNMF of~$M$ mentioned at the beginning of the proof.
Since there is no $4$-point solution of the NPP instance, we have $\rrkp(M) = 5$.
(Since $\rk(M) = 4$, Lemma~\ref{lemma:basic_properties_RNR} implies $\rkp(M) = 5$.)
Since there is no $5$-point rational solution of the NPP instance, the restricted nonnegative rank of~$M$ over~$\Q$ is~$6$.
\end{proof}

\section{Conclusion and Future Work}

We have shown that an optimal \emph{restricted} nonnegative factorization of a rational matrix may require factors that have irrational entries. An outstanding open problem is whether the same holds for general nonnegative factorizations.  An answer to this question will likely shed light on the issue of whether the nonnegative rank can be computed in \NP.  

Another contribution of the paper has been to develop connections between nonnegative matrix factorization and probabilistic automata, thereby
answering an old question
concerning the latter. Pursuing this connection, and closely related to the above-mentioned open problem, one can ask whether, given a probabilistic automaton with rational transition probabilities, one can always find a minimal equivalent probabilistic automaton that also has rational transition probabilities.

\subparagraph*{Acknowledgements.}

The authors would like to thank Michael Benedikt for stimulating discussions, and anonymous referees for their helpful suggestions.
Chistikov was sponsored in part by the ERC Synergy award ImPACT.
Kiefer is supported by a University Research Fellowship of the Royal Society.
Kiefer, Maru\v{s}i\'{c}, Shirmohammadi, and Worrell gratefully acknowledge the support of the EPSRC.

\bibliographystyle{plainurl}
\bibliography{references}

\iftechrep{
\newpage
\appendix
\section{Proofs of Section~\ref{sec-prelims}} \label{app-prelims}

\newcommand{\augA}{(A \ \ b)}%
\newcommand{\chat}{\hat c}%
We show point~\ref{r:npp-to-nmf} from the description of Proposition~\ref{prop:Belgian-reduction}.
Let $A \in \Q^{n \times (r-1)}$ and $b \in \Q^n$ such that $P = \{\, x \in \R^{r-1} \mid A x + b \ge 0 \,\}$ is a full-dimensional polytope.
Hence $\augA \in \Q^{n \times r}$ has full rank~$r$.
Let $S = \conv\{s_1, \ldots, s_m\} \subset P$ be a full-dimensional polytope.
Define matrix $M \in \Q^{n \times m}$ with $M^j = A s_j + b$ for $j \in [m]$.
We show the correspondences (a)~and~(b) from the main text.
\begin{enumerate}
\item[(a)]
Consider an RNMF $M = W \cdot H$ with inner dimension~$d$.
We can assume that $W$ has no zero-columns.
We have $\col(W) = \col(M) = \col(\augA)$: indeed, the first equality holds because the NMF is restricted, and the second equality holds as $S$ is full-dimensional.
So there is a matrix $C \in \R^{r \times d}$ such that $W = \augA \cdot C$.
For all $i \in [d]$ define $\chat_i \in \R^{r-1}$ so that $C^i = \begin{pmatrix} \chat_i \\ C_{r,i} \end{pmatrix}$.
Since $W$ is nonnegative, we have $A \chat_i + C_{r,i} b \ge 0$.
Observe that there is no $y \in \R^{r-1} \setminus \{0\}$ with $A y \ge 0$:
indeed, if $A y \ge 0$ for some nonzero~$y$ then for any $x \in P$ and any $t > 0$ we have $A (x + t y) + b \ge A x + b \ge 0$, implying that $P$ is unbounded, which is false.
We use this observation to show that $C_{r,i} > 0$.
\begin{itemize}
\item
Towards a contradiction, suppose $C_{r,i} = 0$.
Then $A \chat_i \ge 0$.
Since $W$ has no zero-columns, we have $\chat_i \ne 0$, contradicting the observation above.
\item
Towards a contradiction, suppose $C_{r,i} < 0$.
By dividing the inequality $A \chat_i + C_{r,i} b \ge 0$ by $-C_{r,i}$ we obtain
$ A (-\chat_i/C_{r,i}) - b \ge 0$.
Let $x \in P \setminus \{\chat_i / C_{r,i}\}$.
Then $A x + b \ge 0$, and by adding the previous inequality, we obtain $A (x - \chat_i/C_{r,i}) \ge 0$.
Since $x \ne \chat_i / C_{r,i}$, this also contradicts the observation above.
\end{itemize}
Thus we have shown that $C_{r,i} > 0$ for all $i \in [d]$.
Define a diagonal matrix $D \in \R^{d \times d}$ such that $D_{i,i} = C_{r_i} > 0$, and define $H' = D \cdot H$.
Then we have:
\begin{align*}
\augA
\begin{pmatrix}
 s_1 & s_2 & \cdots & s_m \\
 1   & 1   & \cdots & 1
\end{pmatrix}
&=
M
= W \cdot H
= \augA \, C \cdot H
= \augA \, C D^{-1} \cdot D H
\\
&=
\augA
\begin{pmatrix}
 \chat_1 / C_{r,1} & \chat_2 / C_{r,2} & \cdots & \chat_d / C_{r,d} \\
 1                 & 1                 & \cdots & 1
\end{pmatrix}
H'.
\end{align*}
Since the columns of $\augA$ are linearly independent, it follows:
\[
\begin{pmatrix}
 s_1 & s_2 & \cdots & s_m \\
 1   & 1   & \cdots & 1
\end{pmatrix}
=
\begin{pmatrix}
 \chat_1 / C_{r,1} & \chat_2 / C_{r,2} & \cdots & \chat_d / C_{r,d} \\
 1                 & 1                 & \cdots & 1
\end{pmatrix}
H'.
\]
Considering the last row, we see that each column of~$H'$ sums up to~$1$.
Since $H'$ is nonnegative, $H'$ is column-stochastic.
For $i \in [d]$ define $q_i = \chat_i / C_{r,i}$.
Then for all $j \in [m]$ we have $s_j \in \conv\{q_1, \cdots, q_d\}$.
Hence, defining the polytope $Q = \conv\{q_1, \cdots, q_d\}$ we have $S \subseteq Q$.
For all $i \in [d]$ we have $A \chat_i + C_{r,i} b \ge 0$, hence $A q_i + b \ge 0$.
It follows $Q \subseteq P$.
Thus $Q$ is nested between $P$~and~$S$.
\item[(b)]
Consider $d$~points $q_1, \ldots, q_d \in \R^{r-1}$ with $S \subseteq \conv\{q_1, \ldots, q_d\} \subseteq P$.
Define a matrix $W \in \Q^{n \times d}$ by $W^i = A q_i + b$ for $i \in [d]$.
Matrix~$W$ is nonnegative, as $q_i \in P$.
Define a column-stochastic matrix $H \in \Q^{d \times m}$ so that $s_j = \sum_{i=1}^d H_{i,j} q_i$ for $j \in [m]$.
Such~$H_{i,j}$ exist, as $s_j \in \conv\{q_1, \ldots, q_d\}$.
We have for all $j \in [m]$:
\begin{align*}
M^j 
& = A s_j + b && \text{definition of~$M$} \\
& = A \left( \sum_{i=1}^d H_{i,j} q_i \right) + b && \text{definition of~$H$} \\
& = \left( \sum_{i=1}^d A q_i \cdot H_{i,j} \right) + b \\
& = \sum_{i=1}^d (\underbrace{A q_i + b}_{W^i} ) \cdot H_{i,j} && \text{as $H$ is column-stochastic} \\
& = \left( W \cdot H \right)^j.
\end{align*}
Hence $M = W \cdot H$ is an NMF.
Since $S$ is full-dimensional, we have $\col(M) = \col((A \ \ b))$.
As $\col(W) \subseteq \col((A \ \ b))$ it follows that $\col(W) \subseteq \col(M)$, hence NMF $M = W \cdot H$ is restricted.
\end{enumerate} 

\section{Proofs of Section~\ref{sec-coverability}} \label{app-coverability}

\subsection{Discussion of Erroneous Claims in~\cite{DBLP:journals/tc/Bancilhon74}} \label{sub-Hippie-mistakes}

As mentioned in the main text, Bancilhon~\cite[Theorem~13]{DBLP:journals/tc/Bancilhon74} claims a statement that implies a positive answer to Paz's Question~\ref{question-Paz}.
We have shown that the correct answer to Paz's question is negative.
In the following we track down where the paper~\cite{DBLP:journals/tc/Bancilhon74} goes wrong.
The proof of~\cite[Theorem~13]{DBLP:journals/tc/Bancilhon74} offered therein relies on another (wrong) claim about \emph{cones}.

\newcommand{\C}{\mathcal{C}}%
\newcommand{\V}{\mathcal{V}}%
Let $\V$ be a vector space.
Let $v_1, \ldots, v_n \in \V$.
The \emph{polyhedral cone generated by $v_1, \ldots, v_n$} is the set
\[
\left\{
\sum_{i=1}^n \lambda_i v_i \;\middle\vert\; \lambda_1, \ldots, \lambda_n \ge 0
\right\} \subseteq \V\,.
\]
\begin{claim}[{\cite[Theorem~2]{DBLP:journals/tc/Bancilhon74}}, slightly paraphrased]
\label{claim-Hippie-Thm2}
 Let $\V$ be a vector space.
 Let $\V' \subseteq \V$ be a subspace of~$\V$.
 Let $\C \subseteq \V$ be a polyhedral cone generated by $n$~vectors that also span~$\V$.
 Then $\C \cap \V'$ is a polyhedral cone generated by at most $n$~vectors.
\end{claim}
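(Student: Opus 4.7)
My plan is to refute the claim rather than prove it: the paper explicitly flags it as \emph{wrong}, and a quick analysis suggests that intersecting $\C$ with a subspace $\V'$ can create new extreme rays in the relative interiors of higher-dimensional faces of $\C$, so the combinatorics of face--subspace crossings is not controlled by~$n$. The first step is therefore to search for a small counterexample, and the cleanest candidates are the non-negative orthants $\C = \R^n_{\ge 0}$ with coordinate generators $e_1, \ldots, e_n$ spanning $\V = \R^n$. For a hyperplane $\V' = \{x : a \cdot x = 0\}$, the $2$-face $\mathrm{cone}(e_i, e_j)$ crosses $\V'$ transversally exactly when $a_i$ and $a_j$ have opposite signs, so the number of candidate new extreme rays is $p \, q$, where $p$ and $q$ are the counts of positive and negative entries of~$a$. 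Setting $n = 5$ with sign pattern $(+,+,-,-,-)$ gives $p \, q = 6 > 5 = n$, which is the smallest case where the bound should fail.

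\textbf{Construction.} Concretely I would take $\V' = \{x \in \R^5 : x_1 + x_2 - x_3 - x_4 - x_5 = 0\}$ and consider the six rays $e_i + e_j$ for $i \in \{1,2\}$ and $j \in \{3,4,5\}$. Each lies in $\C \cap \V'$. To see these are precisely the extreme rays of the four-dimensional pointed cone $\C \cap \V'$, one observes that a $1$-face is obtained by saturating three of the five inequalities $x_k \ge 0$; a support analysis shows that only the six combinations with one remaining active index in $\{1,2\}$ and one in $\{3,4,5\}$ yield a non-zero ray (the other $\binom{5}{3}-6 = 4$ choices leave the defining equation of $\V'$ unsatisfiable except by the origin). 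Conic generation of $\C \cap \V'$ by these six rays reduces to a $2 \times 3$ transportation problem with row sums $(x_1, x_2)$ and column sums $(x_3, x_4, x_5)$, which is feasible precisely because $x_1 + x_2 = x_3 + x_4 + x_5$ on~$\V'$. Hence $\C \cap \V'$ requires exactly six generators, while $n = 5$, contradicting the claim.

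\textbf{Main obstacle.} The conceptual point hidden in the hypotheses is that although $\C$ is simplicial here (its generators are even linearly independent), the slice $\C \cap \V'$ is not, and its extreme-ray count is governed by the $2$-face combinatorics rather than by~$n$ itself. The argument in~\cite{DBLP:journals/tc/Bancilhon74} presumably treats the spanning generators as if they controlled the slice structure, implicitly ruling out the face-crossing phenomenon above. Pinpointing this conflation---and checking that the six extreme rays above are genuinely irredundant---is the step of the plan that requires the most care; everything else is routine once the $(+,+,-,-,-)$ sign pattern is identified.
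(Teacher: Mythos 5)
Your proposal is correct: it refutes the claim (which is indeed what the paper does, since the claim is the erroneous statement from Bancilhon that the paper is debunking), but with a genuinely different counterexample. The paper works in $\mathbb{R}^6$ with a non-simplicial cone generated by six vectors (equivalently $0 \le x_1 \le x_4$, $0 \le x_2 \le x_5$, $0 \le x_3 \le x_6$) and intersects it with the subspace $x_4 = x_5 = x_6$; the slice is a cone over the $3$-cube with eight extremal generators, so it cannot be generated by six vectors. You instead take the simplicial cone $\mathbb{R}^5_{\ge 0}$, whose five generators even form a basis of $\mathbb{R}^5$, and cut it with the hyperplane $x_1 + x_2 = x_3 + x_4 + x_5$; the slice is a cone over a triangular prism whose six extreme rays $e_i + e_j$, $i \in \{1,2\}$, $j \in \{3,4,5\}$, exceed $n = 5$. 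Both arguments rest on the same mechanism -- the slice of a pointed cone must contain a generator on each of its extreme rays, and the number of extreme rays of the slice is governed by how the subspace crosses the $2$-faces, not by $n$ -- and your verification (saturating three of the five facet inequalities to identify the six one-dimensional faces, plus the $2\times 3$ transportation argument showing they generate; strictly speaking only extremality is needed for the lower bound) is sound. What your route buys is a smaller ambient dimension, a codimension-one subspace, and the sharper message that the claim fails even for simplicial cones; what the paper's example buys is a larger gap ($8$ versus $6$) and a product structure that the paper reuses in its second counterexample, against the related assertion that a cone bounded by $n$ hyperplanes is generated by $n$ vectors. Your diagnosis of where the original argument goes wrong matches the paper's: Bancilhon's proof implicitly conflates the number of generators with the number of bounding hyperplanes, which is exactly the face-counting phenomenon your example isolates.
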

Claim~\ref{claim-Hippie-Thm2} is false.
For a counterexample, consider the polyhedral cone~$\C$ generated by the following $6$~vectors:
\[
\begin{pmatrix}
1 \\ 0 \\ 0 \\ 1 \\ 0 \\ 0
\end{pmatrix}, \quad
\begin{pmatrix}
0 \\ 0 \\ 0 \\ 1 \\ 0 \\ 0
\end{pmatrix}, \quad
\begin{pmatrix}
0 \\ 1 \\ 0 \\ 0 \\ 1 \\ 0
\end{pmatrix}, \quad
\begin{pmatrix}
0 \\ 0 \\ 0 \\ 0 \\ 1 \\ 0
\end{pmatrix}, \quad
\begin{pmatrix}
0 \\ 0 \\ 1 \\ 0 \\ 0 \\ 1
\end{pmatrix}, \quad
\begin{pmatrix}
0 \\ 0 \\ 0 \\ 0 \\ 0 \\ 1
\end{pmatrix}
\]
It can be described equivalently by the conjunction of inequalities $0 \le x_1 \le x_4$ and $0 \le x_2 \le x_5$ and $0 \le x_3 \le x_6$.
Let $\V' \subset \R^6$ be the vector space defined by the equalities $x_4 = x_5 = x_6$.
Then the cone $\C' = \C \cap \V'$ can be described by the inequalities $0 \le x_1, x_2, x_3 \le x_4 = x_5 = x_6$.
The cone $\C'$ is generated by the following 8~vectors:
\[
\begin{pmatrix}
0 \\ 0 \\ 0 \\ 1 \\ 1 \\ 1
\end{pmatrix}, \quad
\begin{pmatrix}
0 \\ 0 \\ 1 \\ 1 \\ 1 \\ 1
\end{pmatrix}, \quad
\begin{pmatrix}
0 \\ 1 \\ 0 \\ 1 \\ 1 \\ 1
\end{pmatrix}, \quad
\begin{pmatrix}
0 \\ 1 \\ 1 \\ 1 \\ 1 \\ 1
\end{pmatrix}, \quad
\begin{pmatrix}
1 \\ 0 \\ 0 \\ 1 \\ 1 \\ 1
\end{pmatrix}, \quad
\begin{pmatrix}
1 \\ 0 \\ 1 \\ 1 \\ 1 \\ 1
\end{pmatrix}, \quad
\begin{pmatrix}
1 \\ 1 \\ 0 \\ 1 \\ 1 \\ 1
\end{pmatrix}, \quad
\begin{pmatrix}
1 \\ 1 \\ 1 \\ 1 \\ 1 \\ 1
\end{pmatrix}. \quad
\]
All those vectors are extremal in cone~$\C'$, so it cannot be generated by fewer than~$8$ vectors.
Hence, Claim~\ref{claim-Hippie-Thm2} is false.
%

Let us further examine how Claim~\ref{claim-Hippie-Thm2} was justified in~\cite{DBLP:journals/tc/Bancilhon74}.
The proof offered therein starts with the following claim, which is stated there without further justification:
\begin{claim}[{proof of \cite[Theorem~2]{DBLP:journals/tc/Bancilhon74}, slightly paraphrased}] \label{claim-Hippie-first-sentence}
A cone~$\C$ is generated by $n$~vectors if and only if it is limited by $n$~hyperplanes.
\end{claim}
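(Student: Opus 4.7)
The plan is to refute Claim~\ref{claim-Hippie-first-sentence}. The claim equates the minimum number of vectors generating a polyhedral cone with the minimum number of half-spaces whose intersection yields that cone. While these two quantities coincide for \emph{simplicial} cones (and thus for every polyhedral cone in $\R^2$), in general they differ, reflecting the classical combinatorial gap between the numbers of vertices and facets of a convex polytope.

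Concretely, I would exhibit the cone over the 3-dimensional unit cube,
\[
\C_0 \;=\; \{\, (x_1, x_2, x_3, x_4) \in \R^4 \mid 0 \le x_i \le x_4 \text{ for } i \in \{1,2,3\} \,\},
\]
which is essentially the restriction of the cone $\C' = \C \cap \V'$ from the counterexample to Claim~\ref{claim-Hippie-Thm2} to its affine hull. This cone $\C_0$ is cut out by the 6 half-space constraints above, yet its extreme rays are the 8 vectors $(b_1, b_2, b_3, 1)^\top$ with $(b_1, b_2, b_3) \in \{0,1\}^3$, and every generating set must contain a positive multiple of each extreme ray. Hence no generating set of size less than~8 exists, so $\C_0$ is limited by 6 hyperplanes but requires 8 generators, refuting the claim.

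The main verification step is to confirm that each of the 8 listed vectors is an extreme ray. If $v = (b_1, b_2, b_3, 1)^\top$ were a conic combination $\sum_i \lambda_i v_i$ of the other 7 generators, then reading the 4th coordinate gives $\sum_i \lambda_i = 1$; reading each coordinate $j \in \{1,2,3\}$ with $b_j = 1$ forces $\lambda_i = 0$ whenever the $j$-th coordinate of $v_i$ is $0$; and reading a coordinate $j$ with $b_j = 0$ forces $\lambda_i = 0$ whenever the $j$-th coordinate of $v_i$ is $1$. These conditions jointly force $\lambda_i = 0$ for every $v_i$ distinct from $v$, contradicting $\sum_i \lambda_i = 1$.

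I expect the main conceptual obstacle to be not the counterexample itself but diagnosing why Claim~\ref{claim-Hippie-first-sentence} sounded plausible to Bancilhon. A likely source is polar duality: for a polyhedral cone $\C$, the polar cone $\C^{\circ}$ \emph{does} swap extreme rays and facets, so the number of extreme rays of $\C$ equals the number of facets of $\C^{\circ}$. However, this yields no direct numerical relationship between extreme rays and facets of $\C$ itself. Making this distinction explicit shows that Claim~\ref{claim-Hippie-Thm2} cannot be salvaged along Bancilhon's lines, and hence the positive answer to Question~\ref{question-Paz} asserted in~\cite{DBLP:journals/tc/Bancilhon74} cannot be rescued via that approach.
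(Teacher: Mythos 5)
Your refutation is correct and uses essentially the same counterexample as the paper: the cone over the $3$-dimensional unit cube in $\R^4$, which is cut out by $6$ half-spaces yet has $8$ extreme rays, so it cannot be generated by fewer than $8$ vectors. Your explicit verification that each vector $(b_1,b_2,b_3,1)^\top$ is extremal (and the remark on polar duality) only adds detail to the argument the paper gives.
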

Claim~\ref{claim-Hippie-first-sentence} is also false.
For a counterexample, consider the cone~$\C \subseteq \R_+^4$ limited by the following 6~hyperplanes:
\[
0 \le x_1 \le y, \quad 
0 \le x_2 \le y, \quad 
0 \le x_3 \le y.
\]
(All vectors in~$\C$ with $y=y^*$ form a cube of length~$y^*$.)
The following set contains 8~vectors, all of which are extremal in~$\C$:
\[
\{0,1\} \times \{0,1\} \times \{0,1\} \times \{1\} \quad\subset\quad \C.
\]
Hence $\C$ is not generated by 6~vectors, contradicting Claim~\ref{claim-Hippie-first-sentence}.

\subsection{Details of the Proof of Proposition~\ref{prop-LMC-determine}}\label{app-coverability-mainproposition}
In this subsection, we complete some details from the proof of Proposition~\ref{prop-LMC-determine}.

\begin{qproposition} {\ref{prop-LMC-determine}}
\stmtpropLMCdetermine
\end{qproposition}
\vspace{1em}

Here we will write $\vzero$ for the column vector with all zeros, whose dimension will be clear from the context. For every $p \in \N$, we denote by $\checkmark^p$ the $p$-fold concatenation of $\checkmark$ by itself.

Let us first take a detailed look at $\back \M$. For every $i \in [m]$ and $j \in [n]$, we have 
\[
(\back \M)_{i, b_j} = \mu(b_j)_{i} \cdot \mathbf{1} = \mu(b_j)_{i, m+1} = M_{j, i}.
\] 
\noindent From here it is easy to see that for every $w \in \Sigma^*$:
\[ (\back \M)^{w}= \begin{cases} 
      \vone & w=\varepsilon\\
      \frac{1}{m} \cdot e_{0} & w=a_{i} \text{ with } i \in [m] \\
      e_{m+1} & w=\checkmark^p \text{ with } p \in \mathbb{N}\\
      (0, M_{j}, 0)^{\top} & w=b_{j}\checkmark^p \text{ with } j \in [n], p \in \mathbb{N}_{0} \\
      \left(\frac{1}{m}M_{j, i}\right) \cdot e_{0} & w=a_{i}b_{j}\checkmark^p \text{ with } i \in [m], j \in [n], p \in \mathbb{N}_{0}\\
      \vzero & \text{otherwise}.
   \end{cases}
\]
 
\noindent That is,
\renewcommand{\kbldelim}{(}
\renewcommand{\kbrdelim}{)}
\[
\back \M
= \kbordermatrix{
    & \varepsilon & b_{1} & \cdots & b_{n} & \checkmark & a_{i}  & a_{i}b_{j} & b_{1}\checkmark  & \cdots & b_{n}\checkmark & \checkmark^2 & \cdots\\
 & 1 & 0 & \cdots & 0 & 0 & \frac{1}{m} & \frac{1}{m} M_{j, i} & 0 & \cdots & 0 & 0 & \cdots\\
 & 1 & M_{1, 1} & \cdots & M_{n, 1} & 0 & 0  & 0 & M_{1, 1} & \cdots & M_{n, 1} & 0 & \cdots\\
 & \vdots & \vdots & \ddots & \vdots & \vdots & \vdots  &  \vdots  & \vdots & \ddots & \vdots & \vdots & \cdots\\
 & 1 & M_{1, m} & \cdots & M_{n, m} & 0 & 0  & 0  & M_{1, m} & \cdots & M_{n, m} & 0 & \cdots\\
 & 1 & 0 & \cdots & 0  & 1 & 0 & 0 & 0 & \cdots & 0  & 1 & \cdots
  }
\]
Clearly, the following submatrix of $\back \M$ has the same column space as $\back \M$:
\[
 \kbordermatrix{
    & \varepsilon & b_{1} & \cdots & b_{n} & \checkmark\\
0        & 1 & 0 & \cdots & 0 & 0\\
1        & 1 & M_{1, 1} & \cdots & M_{n, 1} & 0\\
\vdots & \vdots & \vdots & \ddots & \vdots & \vdots\\
m        & 1 & M_{1, m} & \cdots & M_{n, m} & 0\\
m+1      & 1 & 0 & \cdots & 0  & 1}
\]
This implies that $\rk(\M) = \rk(M)+2 = r+2$.

For direction~(a), consider an NMF $M = W \cdot H$ and the LMC~$\M'$ defined in the main text.
For its backward matrix $\back \M^{\prime}\in \mathbb{R}_{+}^{\{0, 1, \ldots, d, d+1\} \times \Sigma^*}$, it is easy to see from the definition that 
for every $w \in \Sigma^*$:
\[ (\back \M^{\prime})^{w}= \begin{cases} 
      \mathbf{1} & w=\varepsilon\\
      \frac{1}{m} \cdot e_{0} & w=a_{i} \text{ with } i \in [m] \\
      e_{d+1} & w=\checkmark^p \text{ with } p \in \mathbb{N}\\
      (0, W_{j}, 0)^{\top} & w=b_{j}\checkmark^p \text{ with } j \in [n], p \in \mathbb{N}_{0} \\
      \left(\frac{1}{m} M_{j, i}\right) \cdot e_{0} & w=a_{i}b_{j}\checkmark^p \text{ with } i \in [m], j \in [n], p \in \mathbb{N}_{0}\\
      \vzero & \text{otherwise}.
   \end{cases}
\]
Indeed, for every $i \in [m]$:
\[ 
(\back \M^{\prime})^{a_{i}} = \left(\sum_{l \in [d]} \frac{1}{m} H_{l, i}\right) \cdot e_{0} = \frac{1}{m} \cdot e_{0},
\] 
since $H$ is a column-stochastic matrix. From here it is clear that the columns of the following submatrix of $\back \M^{\prime}$ span $\col (\back \M^{\prime})$:
\[
\kbordermatrix{
    & \varepsilon & b_{1} & \cdots & b_{n} & \checkmark \\
0 & 1 & 0 & \cdots & 0 & 0  \\
1 & 1 & W_{1, 1} & \cdots & W_{n, 1} & 0 \\
\vdots & \vdots & \vdots & \ddots & \vdots & \vdots\\
d & 1 & W_{1, d} & \cdots & W_{n, d} & 0 \\
d+1 & 1 & 0 & \cdots & 0  & 1 
  }
\]
Hence, $\rk(\M^{\prime}) = \rk(W) + 2$.
In particular, if NMF $M = W \cdot H$ is restricted, then $\rk(\M)  = \rk(M) + 2 = \rk(W) + 2 = \rk(\M^{\prime})$.

Now we show in more detail that $\M^{\prime} \ge \M$. We define a row-stochastic matrix:
\[
A = \begin{pmatrix}
1 & 0 & \cdots & 0 & 0\\
0 & H_{1, 1} & \cdots & H_{d, 1} & 0\\
\vdots & \vdots & \ddots & \vdots & \vdots\\
0 & H_{1, m} & \cdots & H_{d, m} & 0\\
0 & 0 & \cdots & 0 & 1
\end{pmatrix} \in \mathbb{R}_{+}^{\{0, 1, \ldots, m, m+1\} \times \{0, 1, \ldots, d, d+1\}}.
\]
For every $i \in [m]$ and $j \in [n]$:
\[
(A \cdot \back \M^{\prime})_{i, b_{j}} = A_{i} \cdot (\back \M^{\prime})^{b_{j}} = (H^{i})^{\top} \cdot (W_{j})^{\top} = (W_{j} \cdot H^{i})^{\top} = M_{j, i} = (\back \M)_{i, b_{j}}.
\]
\noindent From here it follows easily that  $A \cdot \back \M^{\prime} = \back \M$.
Hence, $\M^{\prime} \ge \M$.

Lastly we prove the other direction, (b). Let LMC $\M^{\prime} = (d+2, \Sigma, \mu^{\prime})$ be such that $\M^{\prime} \ge \M$. This implies that there exists a row-stochastic matrix $A$ such that $A \cdot \back \M^{\prime} = \back \M$. Given nonempty subsets $I$ and $J$ of the row and column indices of a matrix $C$, respectively, we write $C_{I, J}$ for the submatrix $(C_{i, j})_{i \in I, j \in J}$. It is clear from the description of $\back \M$ that $M$ has the following $(d+2)$-dimensional nonnegative factorization:
\begin{align}
M^{\top} = (\back \M)_{[m], \{b_{1}, \ldots, b_{n}\}} = A_{[m], \{0, 1, \ldots, d, d+1\}} \cdot (\back \M^{\prime})_{\{0, 1, \ldots, d, d+1\}, \{b_{1}, \ldots, b_{n}\}}. \label{eq:Mk+2dimNMF}
\end{align}
Assuming $M$ is a non-zero matrix, there exist  $i \in [m]$ and  $j \in [n]$ such that $M_{j, i} > 0$. Then,
\[
\frac{1}{m} M_{j, i} = (\back \M)_{0, a_{i} b_{j}} = A_{0} \cdot (\back \M^{\prime})^{a_{i} b_{j}} = \sum\limits_{i_{1}=0}^{d+1} A_{0, i_{1}} \cdot (\back \M^{\prime})_{i_{1}, a_{i} b_{j}} >0.
\]
Since $A$ and $(\back \M^{\prime})$ are nonnegative matrices, there exists $i_{1} \in \{0, 1, ..., d, d+1\}$ such that 
$A_{0, i_{1}} \cdot (\back \M^{\prime})_{i_{1}, a_{i} b_{j}} > 0$.
Without loss of generality, we may assume that $i_1 = 0$. That is,
\begin{align}
A_{0, 0} \cdot (\back \M^{\prime})_{0, a_{i} b_{j}} > 0. \label{ineq:zerotoab}
\end{align} 
Moreover, $(\back \M)_{m+1, \checkmark} = A_{m+1} \cdot (\back \M^{\prime})^{\checkmark} = 1$. The nonnegativity implies that there  exists $i_{2} \in \{0, 1, ..., d, d+1\}$ such that 
$A_{m+1, i_{2}} \cdot (\back \M^{\prime})_{i_{2}, \checkmark} > 0$. Note that $i_2 \neq 0$ since otherwise we would have that $(\back \M)_{m+1, a_{i} b_{j}} \ge A_{m+1, 0} \cdot (\back \M^{\prime})_{0, a_{i} b_{j}} >0$ which is a contradiction. We may therefore, without loss of generality, assume that $i_2 = d+1$, i.e., 
\begin{align}
A_{m+1, d+1} \cdot (\back \M^{\prime})_{d+1, \checkmark} > 0. \label{ineq:lastcheckmark}
\end{align}

\begin{lemma}\label{lemma:reductionNMFtoPAmin:eliminate2cols}
It holds that $M^{\top} = A_{[m], [d]} \cdot (\back \M^{\prime})_{[d], \{b_{1}, \ldots, b_{n}\}}$,
where matrix $A_{[m], [d]}$ is row-stochastic.
\end{lemma}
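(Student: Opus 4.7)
The plan is to derive the $d$-dimensional factorization by showing that the two extra columns of $A$ indexed by $0$ and $d+1$ contribute nothing to the product in~(\ref{eq:Mk+2dimNMF}) once we restrict to rows in~$[m]$. Concretely, I will establish $A_{i,0} = A_{i,d+1} = 0$ for every $i \in [m]$. Given this, dropping these two zero columns of $A_{[m],\{0,\ldots,d+1\}}$ in~(\ref{eq:Mk+2dimNMF}) immediately yields $M^{\top} = A_{[m],[d]} \cdot (\back \M^{\prime})_{[d],\{b_1,\ldots,b_n\}}$, and the row-stochasticity of $A_{[m],[d]}$ follows from that of $A$ because only zero entries are removed from each row.

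To show $A_{i,d+1} = 0$ for $i \in [m]$, I would use that only state~$m+1$ of~$\M$ loops on~$\checkmark$, so $(\back \M)_{i,\checkmark} = 0$ for all $i \in [m]$. Expanding $(A \cdot \back \M^{\prime})_{i,\checkmark} = 0$ and invoking the nonnegativity of $A$ and $\back \M^{\prime}$, every summand $A_{i,l}\cdot (\back \M^{\prime})_{l,\checkmark}$ vanishes; in particular the term with $l=d+1$ vanishes, and the normalization~(\ref{ineq:lastcheckmark}) ensuring $(\back \M^{\prime})_{d+1,\checkmark} > 0$ then forces $A_{i,d+1} = 0$.

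To show $A_{i,0} = 0$ for $i \in [m]$, let $i^{*},j^{*}$ be the indices supplied by~(\ref{ineq:zerotoab}), so that $(\back \M^{\prime})_{0,a_{i^{*}}b_{j^{*}}} > 0$. A short monotonicity argument based on $\mu^{\prime}(b_{j^{*}})\vone \le \vone$ (which holds since $\sum_{\sigma} \mu^{\prime}(\sigma)$ is row-stochastic) together with the entrywise nonnegativity of $e_{0}^{\top} \mu^{\prime}(a_{i^{*}})$ gives
\[
(\back \M^{\prime})_{0,a_{i^{*}}} \;=\; e_{0}^{\top} \mu^{\prime}(a_{i^{*}}) \vone \;\ge\; e_{0}^{\top} \mu^{\prime}(a_{i^{*}}) \mu^{\prime}(b_{j^{*}}) \vone \;=\; (\back \M^{\prime})_{0,a_{i^{*}}b_{j^{*}}} \;>\; 0.
\]
On the other hand, since only state~$0$ of~$\M$ triggers~$a_{i^{*}}$, we have $(\back \M)_{i,a_{i^{*}}} = 0$ for $i \in [m]$, and the same nonnegativity argument as before then forces $A_{i,0} \cdot (\back \M^{\prime})_{0,a_{i^{*}}} = 0$, so $A_{i,0} = 0$.

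The main obstacle is bookkeeping rather than depth: we must verify that discarding the two ``special'' columns of $A$ loses no contribution to $M^{\top}$, and for this the WLOG normalizations~(\ref{ineq:zerotoab}) and~(\ref{ineq:lastcheckmark}) are crucial, since they pin down which rows of~$\back \M^{\prime}$ are non-trivial on the ``witness'' words $a_{i^{*}}$ and~$\checkmark$. Once those normalizations are in place, the zeros hard-coded into $\back \M$ by the construction of~$\M$ do all the work.
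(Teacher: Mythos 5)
Your proof is correct and follows essentially the same route as the paper: both arguments show $A_{i,0}=A_{i,d+1}=0$ for all $i\in[m]$ by combining the normalizations~(\ref{ineq:zerotoab}) and~(\ref{ineq:lastcheckmark}) with nonnegativity and the hard-coded zeros of $\back \M$, and then drop the two zero columns, which also gives row-stochasticity of $A_{[m],[d]}$. The only cosmetic difference is that the paper witnesses $A_{i,0}=0$ directly with the word $a_{i^*}b_{j^*}$ (using $(\back \M)_{i,a_{i^*}b_{j^*}}=0$), whereas you pass to the word $a_{i^*}$ via a substochasticity estimate; both variants are valid.
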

\begin{proof}
By (\ref{eq:Mk+2dimNMF}), it suffices to show that \[
A_{[m], \{0, 1, \ldots, d, d+1\}} \cdot (\back \M^{\prime})_{\{0, 1, \ldots, d, d+1\}, \{b_{1}, \ldots, b_{n}\}}= A_{[m], [d]} \cdot (\back \M^{\prime})_{[d], \{b_{1}, \ldots, b_{n}\}}.
\]
That is, for any $l_{1} \in [m]$ and  $l_{2} \in [n]$ we need to show that
\[
A_{l_{1}, \{0, 1, \ldots, d, d+1\}} \cdot (\back \M^{\prime})_{\{0, 1, \ldots, d, d+1\}, b_{l_{2}}}= A_{l_{1}, [d]} \cdot (\back \M^{\prime})_{[d], b_{l_{2}}}.
\]
To do this, it suffices to show that $A_{l_{1}, 0} \cdot (\back \M^{\prime})_{0, b_{l_{2}}} + A_{l_{1}, d+1} \cdot (\back \M^{\prime})_{d+1, b_{l_{2}}} = 0$. In the following, we prove that $A_{l_{1}, 0} = 0$ and $A_{l_{1}, d+1} = 0$. By an analogous argument, it also holds that $(\back \M^{\prime})_{0, b_{l_{2}}} = 0$ and $(\back \M^{\prime})_{d+1, b_{l_{2}}} = 0$.

If $A_{l_{1}, 0} >0$, then from (\ref{ineq:zerotoab}) it would follow that
\begin{align*}
(\back \M)_{l_{1}, a_{i} b_{j}} = A_{l_{1}} \cdot (\back \M^{\prime})^{a_{i} b_{j}} \ge A_{l_{1}, 0} \cdot (\back \M^{\prime})_{0, a_{i} b_{j}} > 0,
\end{align*} 
which is a contradiction since $(\back \M)_{l_{1}, a_{i} b_{j}} = 0$ for all $l_{1} \in [m]$. 
Similarly, if $A_{l_{1}, d+1} > 0$ then by (\ref{ineq:lastcheckmark}) we would have that 
\begin{align*}
(\back \M)_{l_{1}, \checkmark} = A_{l_{1}} \cdot (\back \M^{\prime})^{\checkmark} \ge A_{l_{1}, d+1} \cdot (\back \M^{\prime})_{d+1, \checkmark} > 0,
\end{align*} 
which is a contradiction since $(\back \M)_{l_{1}, \checkmark} = 0$ for all $l_{1} \in [m]$. Hence, $A_{l_{1}, 0} = A_{l_{1}, d+1} =0$. 

We have thus shown that  $A_{[m], \{0\}} = A_{[m], \{d+1\}} =  (0, \ldots, 0)^{\top}$. Since $A$ is row stochastic, this implies that $A_{[m], [d]}$ is row-stochastic.
\end{proof}

From Lemma~\ref{lemma:reductionNMFtoPAmin:eliminate2cols} we get a  $d$-dimensional nonnegative factorizaton $M = W \cdot H$ where $W = ((\back \M^{\prime})_{[d], \{b_{1}, \ldots, b_{n}\}})^{\top}$ and $H = (A_{[m], [d]})^{\top}$. 

Suppose $\M^{\prime}$ and $\M$ have the same rank. Then $\row(\back \M) = \row(\back \M^{\prime})$, as we already knew that $\row(\back \M) \subseteq \row(\back \M^{\prime})$ since $A \cdot \back \M^{\prime} = \back \M$. Take any $v \in \col(W)$. Since $W^{\top} = (\back \M^{\prime})_{[d], \{b_{1}, \ldots, b_{n}\}}$, there exists $u \in \row(\back \M^{\prime})$ such that $v^{\top} = u_{\{b_{1}, \ldots, b_{n}\}}$. As $\row(\back \M^{\prime}) = \row(\back \M)$, from $\back \M$ it is clear that $v \in \col(M)$. Hence, $\col(W) \subseteq \col(M)$ and therefore $\rk(W) \le \rk(M)$. Since $M = W \cdot H$, this implies that $\rk(W) = \rk(M)$ as required. This completes the proof of Proposition~\ref{prop-LMC-determine}.

\section{Proofs of Section~\ref{sec-NPP32}} \label{app-NPP2d}

First we prove Lemma~\ref{lem:rationallinfract} from the main text.

\begin{qlemma}{\ref{lem:rationallinfract}}
\stmtlemrationallinfract
\end{qlemma}
\begin{proof}
Let vertices $u$ and $v$ lie on edges $p_{1} p_{1}^{\prime}$ and~$p_{2} p_{2}^{\prime}$ of $P$, respectively. Let $s_{1}$ be the point where the supporting line segment $u v$ touches the inner polygon $S$ on its left. Let $\lambda_1$  and $\lambda_2$ be the convex representations of $u$ and $v$, respectively. That is, $u = (1-\lambda_1) p_1+ \lambda_1 p_{1}^{\prime}$ and $v = (1-\lambda_2) p_2 + \lambda_2 p_{2}^{\prime}$. By definition of the ray function $r$ we have $\lambda_2 = r(\lambda_1)$.
 
\begin{figure}[t]
\begin{center}
    \centering
\begin{tikzpicture}[xscale=.6]

\draw[dashed] (0,0)  -- (2.5,0);
\draw[thick] (2.5,0) -- (6,0);
\draw[dashed] (6,0) -- (8,0);
\draw[dashed] (0,3)  -- (.75,3);
\draw[thick] (.75,3) -- (7,3);
\draw[dashed] (7,3) -- (8,3);
\draw[thick] (1,0) -- (7,3);
\draw[thick] (4,0) -- (4.5,3);
\draw[thick] (6,0) -- (2.8,3);

\draw (1,0) node[circle,fill,inner sep=1pt,label=below:$t$](B){};
\draw (2.5,0) node[circle,fill,inner sep=1pt,label=below:$p_1$](v1){};
\draw (4,0) node[circle,fill,inner sep=1pt,label=below:$u$](C){};
\draw (6,0) node[circle,fill,inner sep=1pt,label=below:$p_{1}^{\prime}$](v2){};

\draw (7,3) node[circle,fill,inner sep=1pt,label=above:$p_2$](v3){};
\draw (.75,3) node[circle,fill,inner sep=1pt,label=above:$p_{2}^{\prime}$](v4){};
\draw (4.5,3) node[circle,fill,inner sep=1pt,label=above:$v$](CC){};
\draw (2.8,3) node[circle,fill,inner sep=1pt,label=above:$t^{\prime}$](D){};
\draw (4.25,1.63125) node[circle,fill,inner sep=1pt,label=left:$s_1$](M1){};

\draw[->] (2.7,0.1) --  (3.8,0.1) node[midway,above] {$\lambda_1$};
\draw[->] (6.8,3.1) -- (4.7,3.1) node[midway,above] {$\lambda_2$};

\end{tikzpicture}
\end{center}
 \caption{Lines~$p_{1} p_{1}^{\prime}$ and~$p_{2} p_{2}^{\prime}$ are parallel. 
\label{fig:parallelLines}}
\end{figure}

Let us first consider the case when lines $p_{1} p_{1}^{\prime}$ and~$p_{2} p_{2}^{\prime}$ are parallel; see Figure~\ref{fig:parallelLines} for illustration. Let $t$ denote the intersection of lines $p_{1} p_{1}^{\prime}$ and $s_1 p_2$, and let $t^{\prime}$ denote the intersection of lines $p_{2} p_{2}^{\prime}$ and $s_1 p_{1}^{\prime}$.  Since $s_1 \in \mathbb{Q}^{2}$, we have $t = (1- b) \cdot p_1 + b \cdot p_{1}^{\prime}$ and $t^{\prime} = (1- d) \cdot p_2 + d \cdot p_{2}^{\prime}$ for some $b, d \in \mathbb{Q}$. Numbers $b, d$ can be computed from the input vertices $p_{1}, p_{1}^{\prime}, p_{2}, p_{2}^{\prime}$ using a constant number of arithmetic operations and therefore have bit-length $O(L)$. Since $\triangle t u s_1 \sim \triangle p_2 v s_1$ and $\triangle t p_{1}^{\prime} s_1 \sim \triangle  p_2 t^{\prime} s_1$, we have $\frac{u t}{v p_2} = \frac{t s_{1}}{s_{1} p_2}$ and $\frac{t s_{1}}{s_{1} p_2} = \frac{p_{1}^{\prime} t}{t^{\prime} p_2}$. Hence, $\frac{u t}{v p_2} = \frac{p_{1}^{\prime} t}{t^{\prime} p_2}$ and therefore $\frac{\lambda_1 - b}{\lambda_2} = \frac{1 - b}{d}$. Hence, 
\begin{align*}
\lambda_2 = \frac{d}{1 - b} \cdot \lambda_1 - \frac{b d}{1 - b}
\end{align*} 
where the coefficients $\frac{d}{1 - b}, \frac{b d}{1 - b} \in \mathbb{Q}$ have bit-length $O(L)$.

\begin{figure}[t]
\begin{center}
    \centering
\begin{tikzpicture}[xscale=.6]

\draw[thick] (0,0)  -- (5,0);
\draw[dashed] (5,0) -- (10,0);
\draw[dashed] (9,0) -- (4,3.35);
\draw[thick] (4,3.35) -- (1,5.25);
\draw[thick] (2,4.615) -- (1,0);
\draw[dashed] (6,2) -- (1.4,2);

\draw (0,0) node[circle,fill,inner sep=1pt,label=below:$p_1$](v1){};
\draw (1,0) node[circle,fill,inner sep=1pt,label=below:$u$](C){};
\draw (5,0) node[circle,fill,inner sep=1pt,label=below:$p_{1}^{\prime}$](v2){};
\draw (9,0) node[circle,fill,inner sep=1pt,label=below:$t$](B){};
\draw (6,2) node[circle,fill,inner sep=1pt,label=below:$t^{\prime}$](D){};
\draw (4,3.35) node[circle,fill,inner sep=1pt,label=left:$p_2$](v_3){};
\draw (2,4.615) node[circle,fill,inner sep=1pt,label=left:$v$](CC){};
\draw (1,5.25) node[circle,fill,inner sep=1pt,label=left:$p_{2}^{\prime}$](v_4){};
\draw (1.4,2) node[circle,fill,inner sep=1pt,label=left:$s_1$](M1){};

\draw[->] (0.1,0.1) --  (.9,0.1) node[midway,above] {$\lambda_1$};
\draw[->] (9.1,0.1) -- (6.1,2.1) node[midway,above] {$d$};
\draw[->] (5.7,2.1) -- (1.6,2.1) node[midway,above] {$c$};
\draw[->] (3.95,3.55) -- (2.15,4.7) node[midway,right] {$\lambda_2$};

\end{tikzpicture}
\end{center}
 \caption{Lines~$p_{1} p_{1}^{\prime}$ and~$p_{2} p_{2}^{\prime}$  intersect at point~$t$. \label{fig:intersectingLines}}
\end{figure}

Let us now consider the second case: when lines $p_{1} p_{1}^{\prime}$ and~$p_{2} p_{2}^{\prime}$  are not parallel. Let $t$ denote their intersection; see Figure \ref{fig:intersectingLines} for illustration. Note that $t \in \mathbb{Q}^2$. Without loss of generality, let $p_{1}^{\prime}$ be a convex combination of $\{p_{1}, t\}$ and $p_{2}$ be a convex combination of $\{t, p_{2}^{\prime}\}$. Then $p_{1}^{\prime} =  (1- a) \cdot p_{1} + a \cdot t$ and $p_{2} =  (1- b) \cdot t + b \cdot p_{2}^{\prime}$ for some $a, b \in [0, 1]$. Numbers $a, b$ are rational as they are each the unique solution of a linear system with rational coefficients. Moreover $a, b$ can be computed from the input vertices $p_{1}, p_{1}^{\prime}, p_{2}, p_{2}^{\prime}$ using a constant number of arithmetic operations and therefore have bit-length $O(L)$. Using the above representation of $p_{1}^{\prime}$ as a convex combination of $\{p_1,  t\}$, we get
\begin{align*}
u = (1-\lambda_1) \cdot p_1 + \lambda_1 \cdot ((1- a) \cdot p_1 + a \cdot t) = (1- a \lambda_1) \cdot p_1 + a \lambda_1 \cdot t
\end{align*}
and therefore $u - t = (1- a \lambda_1) \cdot (p_1 - t)$. Similarly, we have
\begin{align*}
v = (1 - b - \lambda_2 + \lambda_2 b) \cdot t + (b + \lambda_2 - \lambda_2 b) \cdot p_{2}^{\prime}
\end{align*}
and therefore $v - t = (b + \lambda_2 - \lambda_2 b) \cdot (p_{2}^{\prime} - t)$.
Let the line through~$s_1$ parallel with $p_{1} t$ intersect $p_{2}^{\prime} t$ at point $t^{\prime}$. Note that $s_1 - t^{\prime} = c \cdot (p_{1} - t)$ and $t^{\prime} - t = d \cdot (p_{2}^{\prime} - t)$ for some $c, d \in [0, 1] \cap \mathbb{Q}$. Numbers $c, d$ can be computed from vertices $p_{1}, p_{1}^{\prime}, p_{2}, p_{2}^{\prime}$ using a constant number of arithmetic operations and thus have bit-length $O(L)$. 
Since $\triangle v t^{\prime} s_1 \sim \triangle  v t u$, it holds that $\frac{v t^{\prime}}{v t} = \frac{s_{1} t^{\prime}}{u t}$. We therefore have that
\begin{align*}
\frac{b + \lambda_2 - \lambda_2 b - d}{b + \lambda_2 - \lambda_2 b} = \frac{c}{1- a \lambda_1}.
\end{align*}
From the above equation we get that
\begin{align*}
\lambda_2 = \frac{a (b - d) \cdot \lambda_1 + b (c - 1) + d}{a (b - 1) \cdot \lambda_1 + (b -1) (c - 1)}
\end{align*}
where $a (b - d), b (c - 1) + d, a (b - 1), (b -1) (c - 1) \in \mathbb{Q}$ have bit-length $O(L)$.
\end{proof}

A function $f : I \to \mathbb{R}$, where $I \subseteq \mathbb{R}$, is called a \emph{rational linear fractional transformation} if there exist $a, b, c, d \in \mathbb{Q}$ such that $f(\lambda) = \frac{a\lambda + b}{c\lambda+d}$ for every $\lambda \in I$.

\begin{lemma} \label{rationallft_closed_composition}
Given two rational linear fractional transformations $f_{1}(\lambda) = \frac{a_{1} \lambda + b_{1}}{c_{1}\lambda+d_{1}}$ and $f_{2}(\lambda) = \frac{a_{2} \lambda + b_{2}}{c_{2} \lambda+d_{2}}$, their composition $f_{2} \circ f_{1}$ is also a rational linear fractional transformation. Specifically, $(f_{2} \circ f_{1}) (\lambda) = \frac{a\lambda + b}{c\lambda+d}$ where $a, b, c, d \in \mathbb{Q}$ are such that
\[
\begin{bmatrix}
   a ~& b \\
   c ~& d 
\end{bmatrix}
=
\begin{bmatrix}
    a_{2} & b_{2} \\
    c_{2} & d_{2} 
\end{bmatrix}
\cdot 
\begin{bmatrix}
    a_{1} & b_{1} \\
    c_{1} & d_{1} 
\end{bmatrix}.
\]
\end{lemma}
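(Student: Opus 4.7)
The plan is to prove this by direct substitution, computing $f_{2}(f_{1}(\lambda))$ explicitly and then matching the coefficients of the resulting expression with the entries of the matrix product. This is a classical fact about Möbius transformations, and the key algebraic observation is that the rule for composing linear fractional transformations is exactly the multiplication rule for the associated $2 \times 2$ matrices.

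Concretely, I would first substitute $f_{1}(\lambda) = \frac{a_{1}\lambda + b_{1}}{c_{1}\lambda + d_{1}}$ into the expression for $f_{2}$, obtaining
\[
(f_{2} \circ f_{1})(\lambda) = \frac{a_{2} \cdot \frac{a_{1}\lambda + b_{1}}{c_{1}\lambda + d_{1}} + b_{2}}{c_{2} \cdot \frac{a_{1}\lambda + b_{1}}{c_{1}\lambda + d_{1}} + d_{2}}.
\]
Then I would clear denominators by multiplying numerator and denominator of the outer fraction by $c_{1}\lambda + d_{1}$, yielding
\[
(f_{2} \circ f_{1})(\lambda) = \frac{a_{2}(a_{1}\lambda + b_{1}) + b_{2}(c_{1}\lambda + d_{1})}{c_{2}(a_{1}\lambda + b_{1}) + d_{2}(c_{1}\lambda + d_{1})} = \frac{(a_{2}a_{1} + b_{2}c_{1})\lambda + (a_{2}b_{1} + b_{2}d_{1})}{(c_{2}a_{1} + d_{2}c_{1})\lambda + (c_{2}b_{1} + d_{2}d_{1})}.
\]
This already exhibits $f_{2} \circ f_{1}$ as a linear fractional transformation, and inspection of the four coefficients shows that they coincide precisely with the entries of the matrix product in the statement.

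To conclude, I would note that rationality is preserved throughout: since $a_{i}, b_{i}, c_{i}, d_{i} \in \mathbb{Q}$ for $i \in \{1,2\}$ and $\mathbb{Q}$ is closed under addition and multiplication, the four coefficients $a, b, c, d$ of the composition lie in $\mathbb{Q}$. Hence $f_{2} \circ f_{1}$ is itself a rational linear fractional transformation. There is no real obstacle here; the only thing to be mildly careful about is the implicit assumption on the domain of $f_{2} \circ f_{1}$, namely that $c_{1}\lambda + d_{1} \ne 0$ on $I$ so that $f_{1}(\lambda)$ is defined and lies in the domain of $f_{2}$, but this is inherited from the hypothesis that $f_{1}$ and $f_{2}$ are well-defined functions on the relevant intervals.
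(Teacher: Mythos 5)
Your proof is correct and follows exactly the same route as the paper's: substitute $f_1$ into $f_2$, multiply through by $c_1\lambda+d_1$, and read off that the resulting coefficients are the entries of the matrix product. The extra remarks on rationality and well-definedness are fine but not needed beyond what the paper already assumes.
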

\begin{proof}
For every $\lambda$ we have
\begin{align*}
(f_{2} \circ f_{1}) (\lambda) &= \frac{a_{2} \frac{a_{1} \lambda + b_{1}}{c_{1}\lambda+d_{1}} + b_{2}}{c_{2} \frac{a_{1} \lambda + b_{1}}{c_{1}\lambda+d_{1}} + d_{2}}\\ & = \frac{a_{2} (a_{1} \lambda + b_{1}) + b_{2}(c_{1}\lambda+d_{1})}{c_{2} (a_{1} \lambda + b_{1}) + d_{2}(c_{1}\lambda+d_{1})} = \frac{(a_{2} a_{1} + b_{2} c_{1})\lambda + (a_{2} b_{1} + b_{2} d_{1})}{(c_{2} a_{1} + d_{2} c_{1} )\lambda + (c_{2}b_{1} + d_{2}d_{1}) }.
\end{align*}
\end{proof}

\begin{corollary}\label{cor:bitsizeNestedVertices}
Given $r_{i}(\lambda)= \frac{a_i \lambda + b_i}{c_i \lambda + d_i}$ in terms of $a_i, b_i, c_i, d_i \in \mathbb{Q}$, for $i \in [k]$, one can compute in polynomial time $a, b, c, d \in \mathbb{Q}$ such that $r(\lambda) = (r_k \circ \ldots \circ r_2 \circ r_1)(\lambda) = \frac{a\lambda + b}{c\lambda+d}$. 
\end{corollary}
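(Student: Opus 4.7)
The plan is to obtain Corollary~\ref{cor:bitsizeNestedVertices} as an essentially immediate consequence of Lemma~\ref{rationallft_closed_composition} by induction on $k$, and then verify that the implicit matrix-product computation can be carried out in polynomial time in the Turing model.

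First I would observe that Lemma~\ref{rationallft_closed_composition} identifies the composition of two rational linear fractional transformations with the product of their associated $2 \times 2$ coefficient matrices. Writing $M_i = \begin{pmatrix} a_i & b_i \\ c_i & d_i \end{pmatrix}$ for $i \in [k]$, a straightforward induction on $k$ then yields that $r(\lambda) = \frac{a\lambda + b}{c\lambda + d}$ with
\[
\begin{pmatrix} a & b \\ c & d \end{pmatrix} = M_k \cdot M_{k-1} \cdots M_2 \cdot M_1 \,.
\]
The base case $k=1$ is trivial, and the inductive step applies Lemma~\ref{rationallft_closed_composition} to the pair consisting of $r_{k}$ and $r_{k-1} \circ \cdots \circ r_1$. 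Rationality of $a,b,c,d$ is preserved at every step since rational matrices are closed under multiplication.

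For the complexity claim, I would first clear denominators: for each $i$, let $N_i \in \Z$ be a common denominator for $a_i, b_i, c_i, d_i$, and set $\widetilde M_i := N_i \cdot M_i \in \Z^{2 \times 2}$. Since scaling a coefficient matrix by a nonzero rational leaves the associated linear fractional transformation unchanged, it suffices to compute the integer matrix product $\widetilde M_k \cdots \widetilde M_1$ and then read off $a,b,c,d$ directly from its entries. A standard bit-size bound for a product of $k$ integer $2 \times 2$ matrices whose entries have bit-length at most $L$ gives entries of bit-length $O(kL + k\log k)$, hence polynomial in the input. Each of the $k-1$ matrix multiplications involves only a constant number of additions and multiplications of such integers, so the total running time is polynomial in the Turing model.

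The main obstacle, if any, is controlling the growth of the intermediate entries; this is handled by the reduction to integer matrix multiplication above and the standard linear-in-$k$ bit-size growth bound. Optionally, after each multiplication one may further reduce the resulting matrix by dividing out the greatest common divisor of its entries, which preserves the underlying linear fractional transformation and can only decrease the bit-length; this is not necessary for the polynomial bound but is a convenient implementation refinement.
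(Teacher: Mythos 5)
Your proposal is correct and follows essentially the same route as the paper: both reduce the composition to the iterated $2\times 2$ coefficient-matrix product via Lemma~\ref{rationallft_closed_composition} and conclude polynomial-time computability. The only difference is that you spell out the bit-size control (clearing denominators and the $O(kL)$-type growth bound) that the paper leaves implicit in ``The result follows,'' which is a harmless and welcome refinement rather than a different argument.
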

\begin{proof}
By Lemma~\ref{rationallft_closed_composition}, the coefficients of the composition $r_k \circ \ldots \circ r_2 \circ r_1$ are computable by iterative matrix multiplication:
 \[
 \begin{bmatrix}
    a ~& b \\
    c ~& d 
 \end{bmatrix}
 =
 \begin{bmatrix}
     a_{k} & b_{k} \\
     c_{k} & d_{k} 
 \end{bmatrix}
 \cdot \ldots \cdot
 \begin{bmatrix}
     a_{2} & b_{2} \\
     c_{2} & d_{2} 
 \end{bmatrix}
 \cdot 
 \begin{bmatrix}
     a_{1} & b_{1} \\
     c_{1} & d_{1} 
 \end{bmatrix}.
 \]
\noindent The result follows.
\end{proof}

\section{Additional Figures for Section~\ref{sec-NPP3d}} \label{app-NPP3d}

\begin{figure}
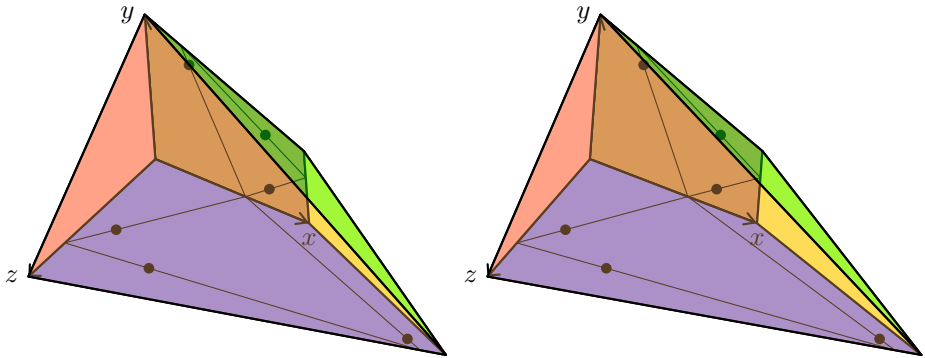

\begin{center}
\objectThreeD{32}{40}{5}{2.5}\hspace{-13mm}
\objectThreeD{32}{40}{-5}{2.5}
\end{center}
\caption{Two projections of~$P$ suitable for ``cross-eyed'' stereoscopic viewing.}
\label{fig-3d-cross-eyed}
\end{figure}

Figure~\ref{fig-3d-cross-eyed} shows a variant of Figure~\ref{fig-3d-parallel-eyed}, suitable for ``cross-eyed'' stereoscopic viewing.

\begin{figure}
\begin{center}
\begin{tikzpicture}[
scale=5,
dot/.style={circle,fill=black,minimum size=4pt,inner sep=0pt,outer sep=-1pt},
spy using outlines={rectangle, magnification=10, width=3.5cm, height=3cm, connect spies}
]
\draw[->,>=stealth'] (0,0) -- (2.4,0) node[right] {$x$};
\draw[->,>=stealth'] (0,0) -- (0,1.35) node[above] {$z$};
\foreach \x in {0.25,0.5,0.75,1.0,1.25,1.5,1.75,2.0,2.25}
    \draw[shift={(\x,0)}] (0pt,0pt) -- (0pt,-0.4pt) node[below] {$\x$};
\foreach \x in {0.25,0.5,0.75,1.0,1.25}
    \draw[shift={(0,\x)}] (0,0) -- (-0.4pt,0) node[left] {$\x$};
\draw[thick] (0,0) coordinate (O1) -- (1,0) coordinate (O2) -- (9/4,1/2) coordinate (O5) -- (0,8/7) coordinate (O6) -- (O1);
\coordinate (I4) at (2,1/2);
\coordinate (I5) at (1/2,3/4);
\coordinate (I6) at (1/6,7/12);
\node[dot,label={[label distance=0mm]-90:$s_4$}] at (I4) {};
\node[dot,label={[label distance=0mm]90:$s_5$}] at (I5) {};
\node[dot,label={[label distance=0mm]0:$s_6$}] at (I6) {};
\coordinate (M1) at ({2-sqrt(2)}, 0);
\coordinate (M4) at  (intersection cs:
       first line={(M1)--(I4)},
       second line={(O5)--(O6)});
\coordinate (M5) at  (intersection cs:
       first line={(M4)--(I5)},
       second line={(O1)--(O6)});
\draw (M1) -- (M4) -- (M5) -- (M1);
\draw (M1) -- ({2-sqrt(2)}, -0.12) node[below,xshift=27mm,yshift=1mm] {$q_1^* = (2-\sqrt{2},0,0)^\top \approx (0.5858,0,0)^\top$};
\node[label={[label distance=-1mm]90:$q_4^*$}] at (M4) {};
\node[label={[label distance=-1.5mm,yshift=1mm]180:$q_5^*$}] at (M5) {};

\coordinate (M1+) at (1, 0);
\coordinate (M4+) at  (intersection cs:
       first line={(M1+)--(I4)},
       second line={(O5)--(O6)});
\coordinate (M5+) at  (intersection cs:
       first line={(M4+)--(I5)},
       second line={(O1)--(O6)});
\coordinate (M6+) at  (intersection cs:
       first line={(M5+)--(I6)},
       second line={(O1)--(O2)});
\draw[dashed] 
(M6+) -- (M5+) -- (M4+) -- (M1+);

\coordinate (M1-) at (0., 0);
\coordinate (M4-) at  (intersection cs:
       first line={(M1-)--(I4)},
       second line={(O5)--(O6)});
\coordinate (M5-) at  (intersection cs:
       first line={(M4-)--(I5)},
       second line={(O1)--(O6)});
\coordinate (M6-) at  (intersection cs:
       first line={(M5-)--(I6)},
       second line={(O1)--(O2)});
\draw[dotted] (M1-) -- (M4-) -- (M5-) -- (M6-);

\coordinate (M1spy) at (0.580,0.030);
\spy [red] on (M1spy) in node [right] at (1.5,1.1);

\end{tikzpicture}
\end{center}
\caption{Detailed view on the $xz$-plane.
The image shows that $q_1^*$ cannot be moved right on the $x$-axis without increasing the number of vertices of the nested polytope.
}
\label{fig-convex-display}
\end{figure}
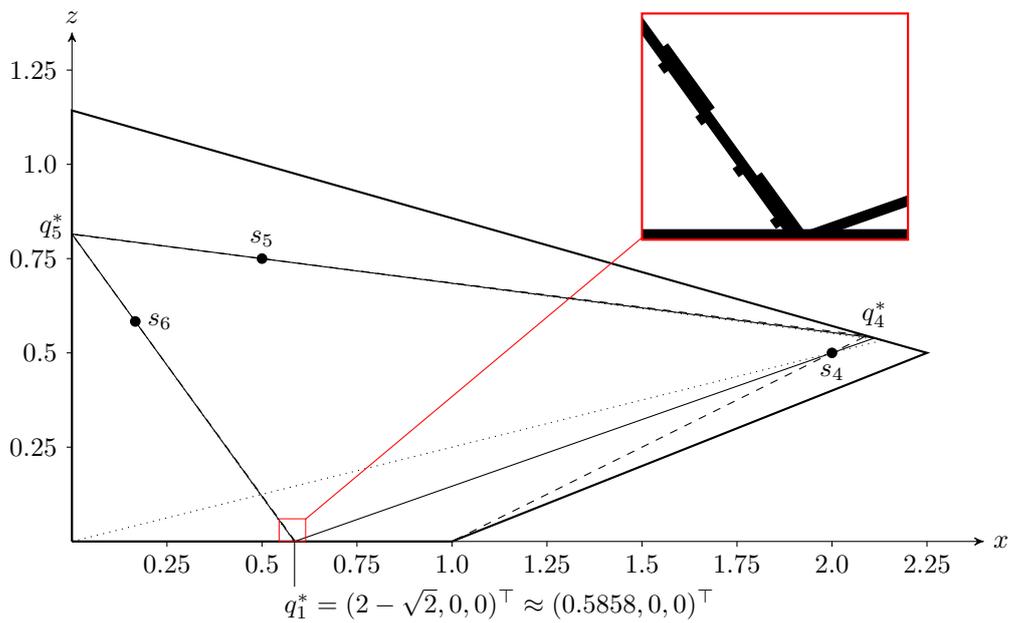

Figure~\ref{fig-convex-display} shows the $xz$-plane, in the same way as Figure~\ref{fig-concave-display} shows the $xy$-plane.

\begin{figure}
\begin{center}
\begin{tikzpicture}[xscale=11,yscale=4]
\draw[->, >=stealth'] (0,0) -- (1.1,0) node[right] {$\lambda$};
\foreach \x in {0.25,0.5,0.75,1.0}
    \draw[shift={(\x,0)}] (0pt,0pt) -- (0pt,-0.4pt) node[below] {$\x$};
\draw[->, >=stealth'] (0,-0.6) -- (0,0.7) node[above] {}; 
\foreach \x in {-0.5,-0.25,0,0.25,0.5}
    \draw[shift={(0pt,\x)}] (0pt,0pt) -- (-0.2pt,0) node[left] {$\x$};
\draw[smooth,samples=194,domain=0:1]
   plot(\x,{40*(4*\x-2-\x^2)/(40*\x-137)});
\node[right] at (0.25,0.4) {$s(\lambda)$};
\draw ({2-sqrt(2)},0) -- ({2-sqrt(2)}, -4pt) node[below] {$2-\sqrt{2} \approx 0.5858$};
\end{tikzpicture}

\end{center}
\caption{The slack function $s(\lambda) = \frac{23 \lambda - 80}{40 \lambda - 137} - \lambda$ corresponding to Figure~\ref{fig-convex-display}.
When $s(\lambda)<0$, there is no nested triangle with vertex $(\lambda,0,0)$.}
\label{fig-convex-slack}
\end{figure} 

Figure~\ref{fig-convex-slack} plots the corresponding slack function.
It shows that a sign change from positive to negative occurs at the root $\lambda = 2 - \sqrt{2}$.

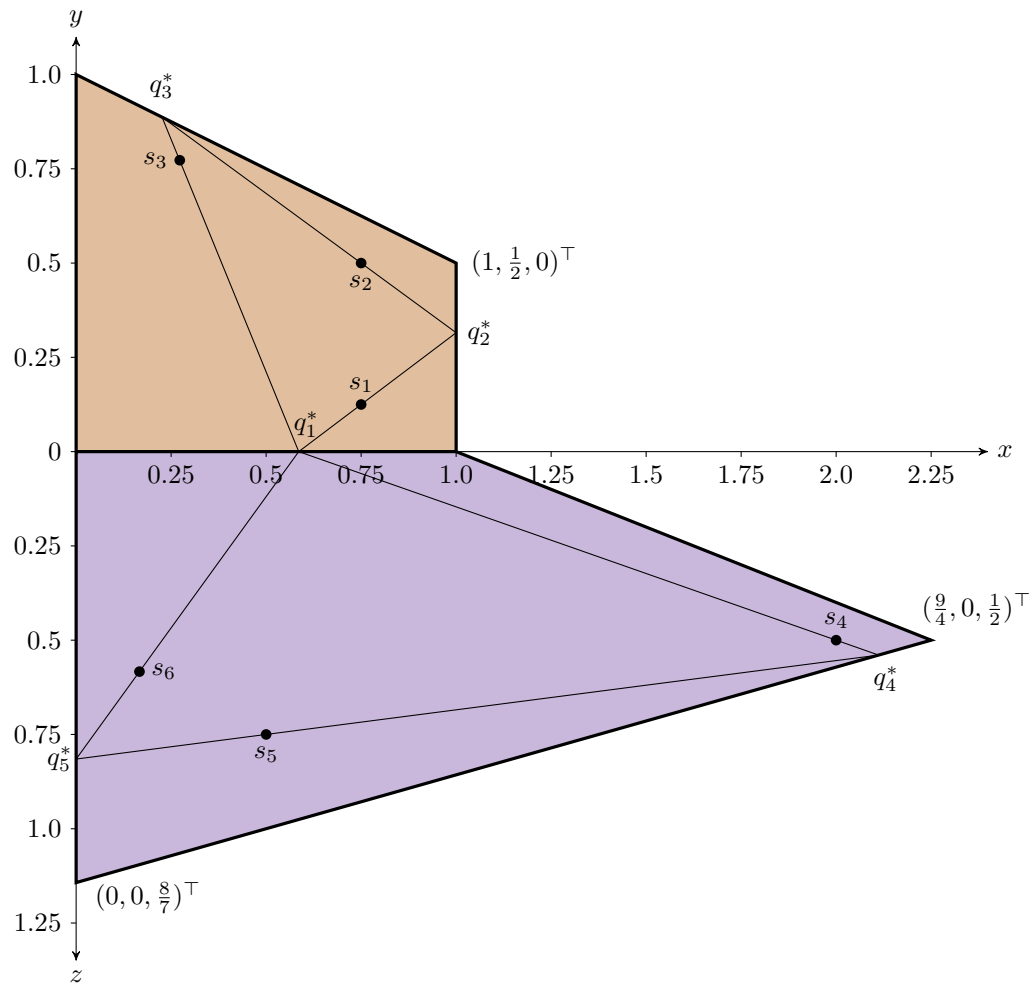
\begin{figure}
\begin{center}
\begin{tikzpicture}[
scale=5,
dot/.style={circle,fill=black,minimum size=4pt,inner sep=0pt,outer sep=-1pt},
]
    
\draw[very thick,fill=brown!95!orange!50] (0,0) coordinate (O1) -- (1,0) coordinate (O2) -- (1,1/2) coordinate (O3) -- (0,1) coordinate (O4) -- (O1);
\coordinate (I1) at (3/4,1/8);
\coordinate (I2) at (3/4,1/2);
\coordinate (I3) at (3/11,17/22);
\coordinate (M1) at ({2-sqrt(2)}, 0);
\coordinate (M2) at  (intersection cs:
       first line={(M1)--(I1)},
       second line={(O2)--(O3)});
\coordinate (M3) at  (intersection cs:
       first line={(M2)--(I2)},
       second line={(O3)--(O4)});
\draw (M1) -- (M2) -- (M3) -- (M1);

\draw[very thick,fill=blue!60!orange!35] (0,0) coordinate (O1) -- (1,0) coordinate (O2) -- (9/4,-1/2) coordinate (O5) -- (0,-8/7) coordinate (O6) -- (O1);
\coordinate (I4) at (2,-1/2);
\coordinate (I5) at (1/2,-3/4);
\coordinate (I6) at (1/6,-7/12);
\coordinate (M4) at  (intersection cs:
       first line={(M1)--(I4)},
       second line={(O5)--(O6)});
\coordinate (M5) at  (intersection cs:
       first line={(M4)--(I5)},
       second line={(O1)--(O6)});
\draw (M1) -- (M4) -- (M5) -- (M1);

\draw[->,>=stealth'] (0,0) -- (2.4,0) node[right] {$x$};
\draw[->, >=stealth'] (0,0) -- (0,1.10) node[above] {$y$};
\draw[->,>=stealth'] (0,0) -- (0,-1.35) node[below] {$z$};
\foreach \x in {0.25,0.5,0.75,1.0,1.25,1.5,1.75,2.0,2.25}
    \draw[shift={(\x,0)}] (0pt,0pt) -- (0pt,-0.4pt) node[below] {$\x$};
\foreach \x in {0,0.25,0.5,0.75,1.0}
    \draw[shift={(0,\x)}] (0,0) -- (-0.4pt,0) node[left] {$\x$};
\foreach \x in {0.25,0.5,0.75,1.0,1.25}
    \draw[shift={(0,-\x)}] (0,0) -- (-0.4pt,0) node[left] {$\x$};

\node[label={[label distance=-0.5mm,xshift=1mm]90:$q_1^*$}] at (M1) {};
\node[label={[label distance=-1mm]0:$q_2^*$}] at (M2) {};
\node[label={90:$q_3^*$}] at (M3) {};
\node[label={[label distance=-1mm,xshift=1mm]-90:$q_4^*$}] at (M4) {};
\node[label={[label distance=-1mm,xshift=1mm]180:$q_5^*$}] at (M5) {};
\node[dot,label={[label distance=0mm]90:$s_1$}] at (I1) {};
\node[dot,label={[label distance=0mm]-90:$s_2$}] at (I2) {};
\node[dot,label={[label distance=0mm]180:$s_3$}] at (I3) {};
\node[dot,label={[label distance=0mm]90:$s_4$}] at (I4) {};
\node[dot,label={[label distance=0mm]-90:$s_5$}] at (I5) {};
\node[dot,label={[label distance=0mm]0:$s_6$}] at (I6) {};
\node[label={[label distance=-0.5mm,xshift=0mm]0:$(1,\frac12,0)^\top$}] at (O3) {};
\node[label={[label distance=-0.5mm,xshift=6mm]90:$(\frac94,0,\frac12)^\top$}] at (O5) {};
\node[label={[xshift=0mm,yshift=-2mm]0:$(0,0,\frac87)^\top$}] at (O6) {};

%
%
%
%
\end{tikzpicture}
\end{center}
\caption{Combined view of the $xy$-plane and the $xz$-plane.
One may imagine that the figure is folded along the $x$-axis.
}
\label{fig-combined}
\end{figure} 

Figure~\ref{fig-combined} provides a combined view of the $xy$-plane and the $xz$-plane, with the coordinates of some vertices of~$P$.

}{}

\end{document}